\pgfplotsset{compat=1.14}
\newtheorem{theorem}{Theorem}
\newtheorem{remark}{Remark}
\newtheorem{proposition}{Proposition}
\newtheorem{lemma}{Lemma}
\DeclareSymbolFont{bbold}{U}{bbold}{m}{n}
\DeclareSymbolFontAlphabet{\mathbbold}{bbold}
\newcommand{\1}{\mathbbold{1}}
\newcommand{\cB}{\mathcal{B}}
\newcommand{\cR}{\mathcal{R}}
\newcommand{\boldw}{\mathbf{w}}
\newcommand{\boldx}{\mathbf{x}}
\newcommand{\boldy}{\mathbf{y}}
\def\namedlabel#1#2{\begingroup
	\def\@currentlabel{#2}%
	\label{#1}\endgroup
}
\begin{document}
\title{Trace Reconstruction with Bounded Edit Distance}

\author{\textbf{Jin Sima} \IEEEauthorblockN{ and \textbf{Jehoshua Bruck}}\\
	\IEEEauthorblockA{
	Department of Electrical Engineering,
California Institute of Technology, Pasadena 91125, CA, USA\\}}

\maketitle

\begin{abstract}
The trace reconstruction problem studies the number of noisy samples needed to recover an unknown string $\boldx\in\{0,1\}^n$ with high probability, where the samples are independently obtained by passing $\boldx$ through a random deletion channel with deletion probability $q$. The problem is receiving significant attention recently due to its applications in DNA sequencing and DNA storage. Yet, there is still an exponential gap between upper and lower bounds for the trace reconstruction problem. In this paper we study the trace reconstruction problem when $\boldx$ is confined to an edit distance ball of radius $k$, which is essentially equivalent to distinguishing two strings with edit distance at most $k$. It is shown that $n^{O(k)}$ samples suffice to achieve this task with high probability.
\end{abstract}
\pagenumbering{gobble}



\section{Introduction}\label{section:introduction}
The trace reconstruction problem seeks to recover an unknown string $\boldx\in\{0,1\}^n$, given multiple independent noisy samples or traces of $\boldx$. In this paper, a noisy sample is obtained by passing $\boldx$ through a deletion channel, which randomly and independently deletes each bit of $\boldx$ with probability $q$. We are interested in how many samples are needed to recover $\boldx$ with high probability.

The trace reconstruction problem was introduced in \cite{Batu} and proposed earlier in \cite{Lev} under an adversarial setting. It has been receiving increased attention recently due to its application in DNA sequencing \cite{BPRS} and DNA storage under nanopore sequencing \cite{Microsoft,Yazdi}. Also, there are many significant results on trace reconstruction and its variants and generalizations, such as coding for trace reconstruction \cite{Ryan} and population recovery \cite{Ban}.
For average case trace reconstruction, where the reconstruction error probability is averaged over all choices of $\boldx\in\{0,1\}^n$, 
the state of the art upper and lower bounds on the number of samples are $\exp(O(\log^{\frac{1}{3}}(n)))$ \cite{Holden} and $\Omega(\frac{\log^{\frac{5}{2}}(n)}{(\log\log n)^7})$ \cite{Chasel} respectively.
  
Despite the progress for average cases, the trace reconstruction problem proved to be highly nontrivial in worst cases, where the reconstruction error probability goes to zero for arbitrary choice of $\boldx$. For small deletion probabilities, the work in \cite{Rocco} showed that polynomial number of samples suffice when $q\le n^{-(\frac{1}{3}+\epsilon)}$ for some $\epsilon>0$, improving the result in \cite{Batu} for $q\le n^{-(\frac{1}{2}+\epsilon)}$ and some $\epsilon >0$. When the deletion probability becomes constant,
there is still an exponential gap between the upper and lower bounds on the number of samples needed. The first achievable sample size for constant deletion probability $q$ is $\exp(\tilde{O}(n^\frac{1}{2}))$ \cite{Mitzenmacher}, which was improved to $\exp(\Theta(n^\frac{1}{3}))$ in independent and simultaneous works \cite{De} and \cite{Peres}. Both \cite{De} and \cite{Peres} studied mean-based algorithms, which use single bit statistics in traces, for reconstruction. They showed that $\exp(O(n^\frac{1}{3}))$ is the best sample size achieved by mean-based algorithms. A novel approach in \cite{De} and \cite{Peres} is to relate single-bit statistics to complex polynomial analysis, and borrow results from \cite{BE97} on complex analysis. This approach was further developed in \cite{Chaseu}, where multi-bit statistics were considered.  
The current best upper bound on the sample size is $\exp(\tilde{O}(n^\frac{1}{5}))$ \cite{Chaseu}, while the best lower bound $\Omega(\frac{n^{\frac{3}{2}}}{\log^7n})$ \cite{Chasel} is orders of magnitude away from the upper bound. 

While the general trace reconstruction problem is hard to solve, in this paper, we focus on a variant of the trace reconstruction problem with an edit distance constraint. Specifically, the goal is to recover the string $\boldx$ by using its noisy samples and additional information of a given string $\boldy$, which is known to be within a bounded distance from $\boldx$. The edit distance between two strings is commonly defined as the minimum number of deletions, insertions, or substitutions that transform one string into another. In this paper, we consider only deletion/insertion for convenience, as a substitution is an insertion followed by a deletion. We say that a string $\boldx$ is within edit distance $k$ to a string $\boldy$, denoted as $\boldx\in\cB_k(\boldy)$, if $\boldx$ can be obtained from $\boldy$ after at most $k$ deletions and $k$ insertions. Note that the general trace reconstruction problem considers cases where $k=n$. 

The setting considered in this paper arises in many practical scenarios in genome sequencing, where one needs to recover an individual genome sequence of a species, given a reference genome sequence that represents the species \cite{Wiki}. Normally, the genome sequences of a species share some similarity and most of them can be considered to be within a bounded edit distance from the reference genome. 
One example is the Human Genome Project, where a human reference genome is provided to study the difference between individual genomes. 
Complementary to the problem we consider, the work in \cite{Racz} studied approximate trace reconstruction, which aims to find an estimate within a given edit distance to the true string. Note that such an estimate, together with an algorithm to distinguish two strings within edit distance $k$, establishes a solution to the general trace reconstruction problem.

As indicated in \cite{De,Sudan,Mitzenmacher,Mazumdar,Peres}, the problem of worst case trace reconstruction is essentially equivalent to a hypothesis testing problem of distinguishing any two strings using noisy samples. More specifically, the sample complexity needed for trace reconstruction is at most $poly(n)$ times the sample complexity needed to distinguish arbitrary two strings. The same equivalence holds in our setting as well, where a reference string $\boldy$ is known and close to $\boldx$ in edit distance. Hence, for convenience, we consider the problem in the form of distinguishing any two strings $\boldx\in\{0,1\}^n$ and $\boldy\in\{0,1\}^n$ when $\boldx$ is within edit distance $k$ to $\boldy$. One special case of the problem 
is to distinguish two strings within Hamming distance $k$, which was addressed in \cite{Mazumdar} and $n^{O(k)}$ sample complexity was achieved. Recently, an independent work \cite{Sudan} studied the limitations of mean-based algorithms (see \cite{De} and \cite{Peres}) in distinguishing two strings with bounded edit distance. It was shown that mean-based algorithms need at least $n^{O(\log n)}$ traces to distinguish two strings with edit distance of even $4$. The paper \cite{Sudan} also showed that $n^{O(k^2)}$ suffices to distinguish two strings $\boldx\in\{0,1\}^n$ and $\boldy\in\{0,1\}^n$ with special block structures, if $\boldx\in\cB_k(\boldy)$. Yet, as pointed out in \cite{Racz}, it is an open problem whether $n^{O(k)}$ samples suffice to recover a string that is within edit distance $k$ to a known string.

The main contribution of this paper is an affirmative answer to this question. We show that distinguishing two sequences within edit distance $k$ needs at most $n^{O(k)}$ samples. The result is stated in the following.
\begin{theorem}\label{theorem:main}
Let $\boldx\in\{0,1\}^n$ and $\boldy\in\{0,1\}^n$ be two strings satisfying $\boldx\in\cB_k(\boldy)$. Then strings $\boldx$ and $\boldy$ can be distinguished with high probability, given $n^{O(k)}$ independent noisy samples, each obtained by passing $\boldx$ through a deletion channel with deletion probability $q<1$. 	
\end{theorem}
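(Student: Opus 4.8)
\emph{Proof plan.} The plan is to reduce the distinguishing task to exhibiting a single bounded statistic of one trace whose mean separates the two hypotheses by at least $n^{-O(k)}$. Concretely, suppose a function $\varphi\colon\{0,1\}^*\to\{0,1\}$ satisfies $|\bE[\varphi(\boldt)\mid\boldx]-\bE[\varphi(\boldt)\mid\boldy]|\ge n^{-ck}$ for some absolute constant $c$. Since both strings are known to the distinguisher, we may simply select the best such $\varphi$, draw $n^{O(k)}$ traces, and output $\boldx$ or $\boldy$ according to which of the two means the empirical average of $\varphi$ lies closer to; a Hoeffding bound makes the error $o(1)$, and no union bound is incurred. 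Thus the whole theorem reduces to constructing such a $\varphi$. The family I would use is the $\ell$-wise bit statistics with $\ell=\Theta(k)$: for a tuple of positions $\boldi=(i_1<\cdots<i_\ell)$ and a pattern $w\in\{0,1\}^\ell$, put $\varphi_{w,\boldi}(\boldt)=\1[|\boldt|\ge i_\ell,\ t_{i_1}=w_1,\dots,t_{i_\ell}=w_\ell]$. This is the natural generalization of the single-bit (mean-based) statistics behind the $\exp(O(n^{1/3}))$ bounds of \cite{De,Peres}, in the spirit of the multi-bit statistics of \cite{Chaseu}; the key is that with $\ell$ growing \emph{linearly} in $k$ it can be sensitive to an edit-distance-$k$ perturbation, whereas $\ell=1$ provably cannot (\cite{Sudan}).

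To analyze $\bE[\varphi_{w,\boldi}\mid\boldx]$ I would pass to generating functions, mirroring the mean-based analyses but with $\ell$ variables. Viewing the channel as retaining each bit independently with probability $p=1-q$ and telescoping the binomial survival probabilities, one obtains a closed form: the multivariate series $F^{\boldx}_w(z_1,\dots,z_\ell)=\sum_{i_1<\cdots<i_\ell}\bE[\varphi_{w,\boldi}\mid\boldx]\,z_1^{i_1}\cdots z_\ell^{i_\ell}$ equals, up to an explicit nonvanishing rational prefactor and an invertible substitution $z_m\mapsto\zeta_m$ (each $\zeta_m$ a rational function of the $z_m$'s), the polynomial
\[
\Phi^{\boldx}_w(\zeta_1,\dots,\zeta_\ell)\;=\;\sum_{j_1<\cdots<j_\ell}\ \prod_{m=1}^\ell\1[x_{j_m}=w_m]\ \zeta_1^{j_1}\cdots\zeta_\ell^{j_\ell}.
\]
A bookkeeping point inherited from the single-bit case is that the relevant evaluation region for the $\zeta_m$ corresponds to $|z_m|\le 1$, so recovering a large \emph{coefficient} of $F^{\boldx}_w-F^{\boldy}_w$ (a large statistic gap) from a large \emph{value} of $\Phi^{\boldx}_w-\Phi^{\boldy}_w$ costs only a factor $\binom{n}{\ell}=n^{O(k)}$. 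Hence it suffices to find a pattern $w$ of length $O(k)$ and a point $\zeta^\star$ in the allowed region with $|\Phi^{\boldx}_w(\zeta^\star)-\Phi^{\boldy}_w(\zeta^\star)|\ge n^{-O(k)}$.

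This is the crux, and where the bounded edit distance is used. An optimal alignment writes $\boldx$ and $\boldy$ as expansions of a common subsequence of length $\ge n-k$ by at most $k$ inserted symbols each, so the two position maps differ by a staircase taking at most $O(k)$ distinct integer values in $[-k,k]$. Substituting this into $\Phi^{\boldx}_w-\Phi^{\boldy}_w$ expresses it as a sum of $O(k)$ terms, each of the form $\bigl(\zeta_1^{d_1}\cdots\zeta_\ell^{d_\ell}-1\bigr)$ times a ``chunk polynomial'' common to $\boldx$ and $\boldy$, plus $n^{O(k)}$ further monomials supported on the few inserted coordinates. My plan for the lower bound generalizes the Hamming-distance case (where $\Phi^{\boldx}_w-\Phi^{\boldy}_w$ is literally $k$-sparse and a Descartes-type count bounds its multiplicity at the boundary point $\zeta=\mathbf 1$, equivalently $z=\mathbf 1$): restrict the difference to a one-parameter curve through $\zeta=\mathbf 1$ and argue that its order of vanishing there is $O(k)$ — intuitively, after cancelling the common chunks what remains is controlled by the $O(k)$ shift factors $\zeta_1^{d_1}\cdots\zeta_\ell^{d_\ell}-1$ together with the $O(k)$ inserted symbols, so a Descartes/Borwein--Erd\'elyi-type argument (cf.\ \cite{BE97}) caps the multiplicity. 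A nonzero low-order Taylor coefficient is then a nonzero integer, hence at least $1$ in modulus, and evaluating at distance $n^{-O(1)}$ from $\mathbf 1$ along the curve yields the required $n^{-O(k)}$ separation.

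I expect the last argument to be the main obstacle: choosing a \emph{single} short pattern $w$ that simultaneously witnesses all $O(k)$ difference regions so that no cancellation among the $O(k)$ shift contributions occurs on the chosen curve, and then bounding the order of vanishing of the resulting non-sparse but structured polynomial. This is exactly the phenomenon that forces mean-based algorithms to use $n^{\Omega(\log n)}$ traces even at edit distance $4$ (\cite{Sudan}) and that limited the previous achievable bound to $n^{O(k^2)}$ under block-structure assumptions; everything else — the reduction to a separated statistic, the generating-function computation, and the final concentration bound — is routine.
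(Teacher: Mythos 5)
Your overall architecture matches the paper's (multi-bit pattern statistics, generating functions, reduce to lower-bounding the single-variable difference polynomial near $z=1$, Hoeffding at the end), but the step you yourself flag as ``the main obstacle'' is precisely where your plan has a genuine gap, and the route you sketch for it does not go through. You propose to bound the order of vanishing of $\Phi^{\boldx}_w-\Phi^{\boldy}_w$ at $\zeta=\mathbf 1$ by decomposing along an optimal alignment into $O(k)$ shift factors $(\zeta^{d}-1)$ times common chunk polynomials and then invoking a Descartes-type multiplicity count. But each such term only vanishes to order $\ge 1$ at $\zeta=\mathbf 1$, the chunk polynomials have up to $n$ terms, and nothing prevents the $O(k)$ contributions from cancelling to very high order; Descartes-type bounds require sparsity that is absent here. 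The paper closes this gap with two ingredients you have no counterpart for: (i) it takes $\boldw$ to be a \emph{non-periodic} window of length $12k$ ending at the first disagreement (Lemma \ref{lemma:period}), which forces the occurrence-indicator vectors $\1_{\boldw}(\boldx),\1_{\boldw}(\boldy)$ to lie in $\cR_{n,6k}$ (consecutive $1$'s separated by long $0$-runs) while remaining within edit distance $5k$ of each other and differing in the first coordinate where $\boldx,\boldy$ differ (Lemma \ref{lemma:distancek}); and (ii) it invokes a Prouhet--Tarry--Escott-type number-theoretic fact from the authors' deletion-code work (Proposition \ref{proposition:parity}, yielding Lemma \ref{lemma:parity}) that two such separated strings at bounded edit distance must differ in some power sum $\sum_i i^m \1_{\boldw}(\cdot)_i$ with $m=O(k)$, which by Lemma \ref{proposition:divide} is exactly the statement that the difference polynomial vanishes to order at most $O(k)$ at $z=1$. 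Without (i) and (ii) your crux claim is unsupported.

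A secondary, smaller issue: your final step, ``a nonzero low-order Taylor coefficient is a nonzero integer, hence evaluating at distance $n^{-O(1)}$ from $\mathbf 1$ gives the separation,'' is not routine as stated. Writing $f(z)=(z-1)^m q(z)$ with $q(1)$ a nonzero integer, a single real evaluation at $1+n^{-2}$ fails because the higher-order Taylor terms of $f$ at $1$ can dominate $|q(1)|\,n^{-2m}$. The paper needs the Borwein--Erd\'elyi coefficient bound $\sum_i|c_i|\le (n+1)(en/m)^m$ on $q$ (Proposition \ref{proposition:BE}) together with an averaging argument over $2m+2$ equally spaced points on a circle of radius $n^{-2}$ around $1$ to exhibit a point where $|q|\ge 1/(2m+2)$, and it must also verify $|\tfrac{z-q}{1-q}|^n\le 2$ there so that the statistic remains bounded (Lemma \ref{lemma:complexanalysis}). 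Your reduction-to-a-single-statistic framing and the Hoeffding endgame are fine.
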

\begin{remark}
	Theorem \ref{theorem:main} holds for any string $\boldy$ that can be obtained from $\boldx$ after at most $k$ deletions or insertions. The length of $\boldy$ is not necessarily $n$. Yet by definition of the trace reconstruction problem, we focus on length $n$ strings $\boldx$ and $\boldy$.
\end{remark}
The approach we take follows a similar method to that in \cite{Chaseu,De,Sudan,Peres}, in the sense that we derive bounds on multi-bit statistics through complex analysis of a special class of polynomials. Yet, the complex analysis in this paper differs from those in \cite{Chaseu,De,Sudan,Peres} in the following two ways. Firstly, we make use of the fact that the polynomial is related to a number theoretic problem called the Prouhet-Tarry-Escott problem \cite{Borwein}, which is also noted in \cite{Sudan}. This allows us to link the problem to our previous result on deletion codes \cite{SB20}, where we showed that two constrained strings can be distinguished using weighted sums of powers, which is similar in form to the Prouhet-Tarry-Escott problem. Secondly, to find the maximum value of the polynomial, we let the complex variable take values on a small circle around the point $1$, 
while the work in \cite{Chaseu,De,Sudan,Peres} analyze the complex polynomial on a unit circle. By doing this, we are able to improve the $n^{O(k^2)}$ bound in \cite{Sudan} to $n^{O(k)}$. 

The rest of the paper is organized as follows. In Section \ref{section:preliminaries} we provide an introduction to the techniques and the lemmas needed to prove Theorem \ref{theorem:main}. In Section \ref{section:proofoftheorem1}, the proof of Theorem \ref{theorem:main} is given. Section \ref{section:proofoflemma6} presents the proof of a critical lemma on complex analysis. Section \ref{section:conclusion} concludes the paper.  
\section{Proof Techniques and Lemmas}\label{section:preliminaries}
In this section we present a brief introduction to the techniques and key lemmas needed in proving Theorem 1. For strings $\boldx\in\{0,1\}^n$ and $\boldy\in\{0,1\}^n$, let $\tilde{X}=(\tilde{X}_1,\ldots,\tilde{X}_n)$ and $\tilde{Y}=(\tilde{Y}_1,\ldots,\tilde{Y}_n)$ denote the sample obtained by passing $\boldx$ and $\boldy$ through the deletion channel respectively. We have $\tilde{X}_i=\emptyset$ or $\tilde{Y}_j=\emptyset$ if $i$ or $j$ is larger than the length of $\tilde{X}$ or $\tilde{Y}$, respectively. Note that $\tilde{X}$ and $\tilde{Y}$ are sequences of random variables that describe the probability distributions of the samples.

The techniques we use were originated in \cite{De,Peres}, which  
presented the following identity
\begin{align}\label{equation:singlebit}
	\mathbb{E}_{\tilde{X}}\Big[\sum^n_{i=1}\tilde{X}_i(\frac{z-q}{1-q})^i\Big]&=(1-q)\sum^n_{i=1}x_iz^i\nonumber\\
	&\triangleq f^s_{\boldx}(z),
\end{align}
for a sequence $\boldx$ and a complex number $z$. The identity \eqref{equation:singlebit} links the analysis of single bit statistics $\{E_{\tilde{X}}[\tilde{X}_i]\}^n_{i=1}$ to that of complex polynomials. As a result, a lower bound on the maximal difference between single bit statistics $\max_{1\le i\le n}|E_{\tilde{X}}[\tilde{X}_i]-E_{\tilde{Y}}[\tilde{Y}_i]|$ can be obtained through analyzing the maximal value of the polynomial $f^{s}_{\boldx}(z)-f^{s}_{\boldy}(z)$ on a unit disk, a problem referred to as Littlewood type problems and studied in \cite{BE97,BE}. 
Generalizing the approach in \cite{De,Peres}, the papers \cite{Chaseu} and \cite{Chen} presented multi-bit statistics counterparts of \eqref{equation:singlebit}. In this paper, we consider the version from \cite{Chaseu}, stated in the following lemma. 
\begin{lemma}\label{lemma:multibitgeneral} \cite{Chaseu}
	For integer $\ell\ge 1$, complex numbers $z_1,\ldots,z_\ell$, and sequences $\boldx\in \{0,1\}^n$ and $\boldw\in \{0,1\}^\ell$, we have
	\begin{align}\label{equation:multibitgeneral}
		&\mathbb{E}_{\tilde{X}}\Big{[}(1-q)^{-\ell}\sum_{1\le i_1<\ldots<i_{\ell}\le n}\1_{\tilde{X}_{i_j}=w_j,\forall j\in[\ell]}\nonumber\\
		&~~~~~~~~~~~~~~~~~~~~~~~\cdot(\frac{z_1-q}{1-q})^{i_1}\prod^{\ell}_{j=2}(\frac{z_j-q}{1-q})^{i_j-i_{j-1}-1}\Big{]}\nonumber\\
		=&\sum_{1\le j_1<\ldots<j_{\ell}\le n}\1_{x_{j_h}=w_h,\forall h\in[\ell]}z^{j_1}_1\prod^\ell_{h=1}z^{j_h-j_{h-1}-1}_j\nonumber\\
		\triangleq& f_{\boldx,\boldw}(z_1,\ldots,z_\ell),		
	\end{align}
	where $[\ell]=\{1,\ldots,\ell\}$ and $i:i+\ell-1=\{i,\ldots,i+\ell-1\}$.
	For any statement $E$, the variable $\1_E=1$ iff $E$ holds true.  
\end{lemma}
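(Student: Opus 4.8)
The plan is to prove \eqref{equation:multibitgeneral} by evaluating the expectation directly, exploiting the fact that the deletion channel acts \emph{independently} on disjoint blocks of coordinates of $\boldx$. It is convenient to describe a trace through the random set $S=\{s_1<\dots<s_m\}\subseteq[n]$ of surviving coordinates, so that $\tilde X_i=x_{s_i}$ for $i\le m$ and $\tilde X_i=\emptyset$ otherwise, with each coordinate placed in $S$ independently with probability $1-q$. Under this description, the inner sum over trace positions $i_1<\dots<i_\ell$ in \eqref{equation:multibitgeneral} is exactly a sum over $\ell$-element subsets $\{t_1<\dots<t_\ell\}$ of $S$: coordinate $t_h$ is the source landing at trace position $i_h$, and the exponents attached to $z_1$ and to $z_j$ ($j\ge 2$) count, respectively, the number of surviving coordinates up to $t_1$ and the number of surviving coordinates strictly between the consecutive chosen sources $t_{j-1}$ and $t_j$.

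Next I would interchange the (finite) summation with $\mathbb E_{\tilde X}$, and for each fixed tuple $j_1<\dots<j_\ell$ of candidate sources condition on the event $\{j_1,\dots,j_\ell\}\subseteq S$, which has probability $(1-q)^\ell$ and is cancelled by the prefactor $(1-q)^{-\ell}$. Conditioned on this event, the memberships in $S$ of the remaining coordinates are still independent, and they split over the blocks $[1,j_1)$, $(j_1,j_2)$, \dots, $(j_{\ell-1},j_\ell)$ (coordinates past $j_\ell$ contribute nothing), so the conditional expectation factorizes into one factor per block. For a block containing $m$ candidate coordinates the relevant factor is $\mathbb E\big[\big(\tfrac{z-q}{1-q}\big)^{B}\big]$, where $B$ is a sum of $m$ independent Bernoulli$(1-q)$ variables; this equals $\big((1-q)\cdot\tfrac{z-q}{1-q}+q\cdot 1\big)^m=z^m$, the key cancellation being $(1-q)\cdot\frac{z-q}{1-q}+q=z$. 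Carrying this out block by block replaces each $\frac{z_j-q}{1-q}$ by $z_j$ and each block length by the exponent appearing on the corresponding $z_j$, while the indicator $\1_{\tilde X_{i_j}=w_j,\forall j}$ becomes $\1_{x_{j_h}=w_h,\forall h}$ because $j_h$ is the source of trace position $i_h$; summing over $j_1<\dots<j_\ell$ then yields $f_{\boldx,\boldw}(z_1,\dots,z_\ell)$.

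The step I expect to be the only real difficulty is the bookkeeping: one must track the off-by-one conventions in the exponents (whether a chosen source is counted inside its own prefix block) and make the powers of $q$ and $1-q$ coming from the block expansions, the normalization $(1-q)^{-\ell}$, and the denominators of $\frac{z_j-q}{1-q}$ cancel exactly, so that only a polynomial in the $z_j$ survives; one also has to state the block-independence carefully (disjoint coordinate sets, each sampled by an independent coin). A useful cross-check, and an alternative route that avoids the set picture, is to write $\mathbb E[\1_{\tilde X_{i_j}=w_j,\forall j}]=\sum_{j_1<\dots<j_\ell}\1_{x_{j_h}=w_h,\forall h}\binom{j_1-1}{i_1-1}\prod_{h\ge 2}\binom{j_h-j_{h-1}-1}{i_h-i_{h-1}-1}(1-q)^{i_\ell}q^{j_\ell-i_\ell}$, substitute it into the left-hand side, interchange sums so the $j$'s are outermost, and collapse the inner sums over the $i$'s by the binomial theorem $\sum_{r}\binom{L}{r}(z-q)^r q^{L-r}=z^L$ applied once per block; this is the same computation organized differently, and for $\ell=1$ it reduces to \eqref{equation:singlebit}.
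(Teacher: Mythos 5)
The paper does not actually prove this lemma --- it is imported verbatim from \cite{Chaseu} --- so your derivation fills in an omitted argument rather than paralleling one. Your approach is the standard and correct one: describe the trace by the random surviving set, exchange the finite sum with the expectation, condition on the $\ell$ chosen source positions surviving (probability $(1-q)^{\ell}$, cancelling the prefactor), and use independence across the disjoint blocks between consecutive chosen sources together with the moment identity $\mathbb{E}\big[\big(\tfrac{z-q}{1-q}\big)^{B}\big]=\big((1-q)\tfrac{z-q}{1-q}+q\big)^{m}=z^{m}$ for $B\sim\mathrm{Bin}(m,1-q)$; your binomial-coefficient cross-check is the same computation and is also correct. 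The one substantive point is the off-by-one you flag, and it is worth resolving rather than leaving as a caveat: under the paper's $1$-based indexing the exponent $i_1$ counts the chosen source $j_1$ itself among the survivors of $[1,j_1]$, so the first block actually produces $\tfrac{z_1-q}{1-q}\,z_1^{\,j_1-1}$ rather than $z_1^{\,j_1}$ (the blocks for $h\ge 2$ are fine, since $i_h-i_{h-1}-1$ counts only survivors strictly between $j_{h-1}$ and $j_h$). The identity as printed is exact under the $0$-based indexing of \cite{Chaseu}; the same discrepancy is already present in the paper's statement of \eqref{equation:singlebit}, which the direct computation gives as $\sum_j x_j(z-q)z^{j-1}$ rather than $(1-q)\sum_j x_j z^j$. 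This is a defect of the paper's transcription, not of your argument, and it is harmless downstream since only the $\tfrac{1}{n^{O(k)}}$ magnitude of the polynomial is used; but a complete write-up should either switch to $0$-based indices or carry the extra factor $\tfrac{z_1-q}{(1-q)z_1}$ explicitly.
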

     By taking $z_2=\ldots=z_\ell=0$ in \eqref{equation:multibitgeneral}, we obtain
	\begin{align}\label{equation:multibit}
		f_{\boldx,\boldw}(z,0,\ldots,0)=\sum^{n-\ell+1}_{i=1}\1_{\boldx_{i:i+\ell-1}=\boldw}z^i.		
	\end{align}    
Similar to the arguments in \cite{Chaseu}, we prove Theorem \ref{theorem:main} by analyzing the polynomial $f_{\boldx,\boldw}(z,0,\ldots,0)-f_{\boldy,\boldw}(z,0,\ldots,0)$ associated with the multi-bit statistics in \eqref{equation:multibit}. Note that the polynomial $f_{\boldx,\boldw}(z,0,\ldots,0)-f_{\boldy,\boldw}(z,0,\ldots,0)$ is single variate. The way in which the polynomial is analyzed in this paper deviates from that in \cite{Chaseu}. While the paper \cite{Chaseu} taylored the complex analysis arguments in \cite{BE} to obtain improved bounds, in this paper, we exploit number theoretic properties of two strings $\boldx$ and $\boldy$ within edit distance $k$. 

In our previous paper \cite{SB20}, we showed implicitly that the weighted sums of powers $\sum^n_{i=1}i^jx_i$, $j\in\{0,\ldots,O(k)\}$  can be used to distinguish two constrained strings $\boldx$ and $\boldy$ within edit distance $k$. The following lemma makes this statement explicit.   
Let $\mathcal{R}_{n,k}$ denote the set of length $n$ strings such that any two $1$ entries in each string are separated by a $0$ run of length at least $k-1$.
\begin{lemma}\label{lemma:parity} 
	For distinct strings~$\boldx,\boldy\in \mathcal{R}_{n,6k}$, if~$\boldx\in\cB_{6k}(\boldy)$, then there exists an integer $m\in [12k+1]$ such that $\sum^n_{i=1}i^mx_i\ne \sum^n_{i=1}i^my_i$.
\end{lemma}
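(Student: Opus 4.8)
The plan is to prove the contrapositive: assuming $\sum_{i=1}^{n}i^{m}x_{i}=\sum_{i=1}^{n}i^{m}y_{i}$ for every $m\in[12k+1]$, I will show $\boldx=\boldy$. The first move is to repackage the hypothesis analytically. Put $A(z)=\sum_{i}x_{i}z^{i}$, $B(z)=\sum_{i}y_{i}z^{i}$ and $h(z)=A(z)-B(z)$, a polynomial with coefficients in $\{-1,0,1\}$. Expanding around $z=1$ via $z^{i}=\sum_{m}\binom{i}{m}(z-1)^{m}$, the coefficient of $(z-1)^{m}$ in $h$ is $\sum_{i}\binom{i}{m}(x_{i}-y_{i})$, and since $\{\binom{i}{m}\}_{m=1}^{M}$ and $\{i,i^{2},\dots,i^{M}\}$ span the same space of polynomials in $i$ vanishing at $0$, the hypothesis is equivalent to the vanishing of the order $1,\dots,12k+1$ Taylor coefficients of $h$ at $z=1$; that is, $h(z)\equiv t-u \pmod{(z-1)^{12k+2}}$ where $t=\sum_{i}x_{i}$, $u=\sum_{i}y_{i}$. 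So it suffices to show this congruence forces $h\equiv0$ when $\boldx,\boldy\in\mathcal R_{n,6k}$ and $\boldx\in\cB_{6k}(\boldy)$.

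Next I would expose the structure of $h$. Fixing an edit script of at most $6k$ deletions and $6k$ insertions taking $\boldy$ to $\boldx$ and tracking the $1$'s, one gets that the $1$-positions of $\boldx$ are $\{r+\delta(r):r\in S\}\cup I$ and those of $\boldy$ are $S\cup D$, where $S$ is the set of surviving $1$-positions of $\boldy$, $|I|,|D|\le 6k$ (hence $t-u=|I|-|D|$), and $\delta$ is the net displacement, which changes value at most $12k$ times as one scans the surviving $1$'s and satisfies $|\delta|\le6k$. The $6k$-run-length constraint on \emph{both} strings --- two $1$'s always at distance $\ge6k$ --- guarantees that within a maximal block of constant displacement the $1$'s stay $6k$-separated after the shift, so in particular no cancellation occurs; grouping $S$ into the $\le12k+1$ blocks $R_{1},\dots,R_{L}$ with $\delta\equiv c_{j}$ on $R_{j}$ gives
\[
h(z)=\sum_{j:\,c_{j}\neq0}(z^{c_{j}}-1)\Big(\sum_{r\in R_{j}}z^{r}\Big)+\sum_{p\in I}z^{p}-\sum_{r\in D}z^{r}.
\]

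With $h$ in this form, the easy case is when every block with $c_{j}\neq0$ is empty: then $h(z)=\sum_{p\in I}z^{p}-\sum_{r\in D}z^{r}$ and the hypothesis says $\sum_{p\in I}p^{m}=\sum_{r\in D}r^{m}$ for $m=1,\dots,12k+1$; since $|I|+|D|\le12k$, Newton's identities (equivalently, a Vandermonde nonsingularity argument) force $I=D$, so $h\equiv0$. In the remaining case some $R_{j}$ with $c_{j}\neq0$ is nonempty, and I would use the identity $z^{p}-1=(z-1)\frac{z^{p}-1}{z-1}$ (and likewise for $z^{c_{j}}-1$, $z^{r}-1$) together with $t-u=|I|-|D|$ to write $h(z)-(t-u)$ as $(z-1)$ times an explicit polynomial, and then peel off the $12k+1$ forced factors of $z-1$ one at a time, converting the divisibility into a linear system of $12k+1$ equations. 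The point is to reorganize this system --- via summation by parts across the $\le12k$ break-points of $\delta$, so that each block enters only through bounded-complexity data attached to its endpoints and its shift $c_{j}$ --- into a generalized Vandermonde / Prouhet--Tarry--Escott system in $O(k)$ integer unknowns (the $c_{j}$'s, the block endpoints, the positions in $I\cup D$), whose only solution is the trivial one. This last fact is exactly the number-theoretic statement proved implicitly in \cite{SB20} --- the power-sum analogue of the uniqueness of small Prouhet--Tarry--Escott configurations --- and, with the structural reduction above, it yields $h\equiv0$, i.e.\ $\boldx=\boldy$.

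I expect the last step to be the main obstacle. The difficulty is that $h$ can have many nonzero terms (one displaced block moves unboundedly many $1$'s), so the crude Prouhet--Tarry--Escott lower bound ``a $\{-1,0,1\}$-polynomial with a zero of order $N$ at $1$ has at least $N+1$ nonzero terms'' does not by itself contradict the divisibility; one must genuinely collapse each block's contribution down to its $O(k)$ structural parameters and prove nonsingularity there. Getting the bookkeeping right --- the boundary $1$'s of blocks, blocks that share a displacement value, and the inhomogeneous term $t-u$ when $t\neq u$ --- is the technical heart, and is precisely where the link to the Prouhet--Tarry--Escott problem and to our earlier deletion-code analysis in \cite{SB20} is needed.
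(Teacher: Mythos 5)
Your reduction of the hypothesis to the vanishing of the order-$1,\ldots,12k{+}1$ Taylor coefficients of $h$ at $z=1$ (equivalently, $(z-1)^{12k+2}\mid h(z)-h(1)$) is sound, and so is the formal block decomposition of $h$ induced by an edit script; the easy case (all displacements zero) is also handled correctly, since $I\cup D$ contains at most $12k$ distinct positive integers and you have $12k+1$ independent moment conditions. The problem is the hard case, which is the entire content of the lemma: you reduce it to the claim that a certain ``generalized Vandermonde / Prouhet--Tarry--Escott system in $O(k)$ integer unknowns'' has only the trivial solution, and then assert that this ``is exactly the number-theoretic statement proved implicitly in \cite{SB20}'' --- without stating that result, verifying that your block data satisfies its hypotheses, or proving the nonsingularity yourself. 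As you concede, collapsing each displaced block (which contributes unboundedly many terms via $(z^{c_j}-1)\sum_{r\in R_j}z^r$) down to $O(k)$ parameters and proving the resulting system nonsingular is ``the technical heart''; a proof that leaves the technical heart as an expectation is not a proof. Note also that cross-block cancellations and cancellations against $I\cup D$ can occur, so the blocks $R_j$ and the sets $I,D$ are not canonically determined by $h$, which makes ``reorganize into a system in the block parameters'' delicate even to formulate precisely.

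The lemma in fact admits a two-line reduction to a precisely stated result of \cite{SB20}, namely Proposition \ref{proposition:parity}: by Faulhaber's formula, $\sum_{j=1}^{i}j^{m'}$ is a linear combination of $i,i^2,\ldots,i^{m'+1}$ with no constant term, so the hypothesis $\sum_i i^m x_i=\sum_i i^m y_i$ for all $m\in[12k+1]$ immediately gives $\sum_i\bigl(\sum_{j=1}^i j^{m'}\bigr)x_i=\sum_i\bigl(\sum_{j=1}^i j^{m'}\bigr)y_i$ for all $m'\in\{0,\ldots,12k\}$; applying Proposition \ref{proposition:parity} with parameter $2k$ to $\boldx,\boldy\in\mathcal{R}_{n,6k}$ with $\boldx\in\cB_{6k}(\boldy)$ yields $\boldx=\boldy$, contradicting distinctness. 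Your instinct to route the argument through \cite{SB20} and the Prouhet--Tarry--Escott connection is the right one; what is missing is pinning down the exact statement being imported and checking its hypotheses, rather than partially re-deriving and then abandoning its proof.
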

\begin{proof}
	Suppose on the contrary, we have that $\sum^n_{i=1}i^mx_i= \sum^n_{i=1}i^my_i$ for all $m\in[12k+1]$. Then, we have that
\begin{align}\label{equation:sumofpowers}
	\sum^n_{i=1}\Big(\sum^i_{j=1}j^{m'}\Big)x_i= \sum^n_{i=1}\Big(\sum^i_{j=1}j^{m'}\Big)y_i
\end{align} 
for all $m'\in\{0,\ldots,12k\}$. This is because $\sum^i_{j=1}j^{m'}$ is a weighted sum of $i^1,\ldots,i^{m'+1}$ for any $m'\in\{0,\ldots,12k+1\}$ (Faulhaber's formula). Next, we borrow a result from \cite{SB20}.
	\begin{proposition}\label{proposition:parity}\cite{SB20}
		For sequences~$\boldx,\boldy\in \mathcal{R}_{n,3k}$, if~$\boldy\in \mathcal{B}_{3k}(\boldx)$ and~$ \sum^n_{i=1}(\sum^i_{j=1}j^m)x_i= \sum^n_{i=1}(\sum^i_{j=1}j^m)y_i$ for $m\in \{0,\ldots,6k\}$, then~$\boldx=\boldy$.
	\end{proposition}
	Note that $\cB_{5k}(\boldx)\subseteq \cB_{6k}(\boldx)$. Since \eqref{equation:sumofpowers} holds, we apply Proposition \ref{proposition:parity} with $k=2k$ and conclude that $\boldx=\boldy$, which contradicts the fact that $\boldx$ and $\boldy$ are distinct. 	
\end{proof}
Interestingly, 
the following result from \cite{Borwein} connects the sums of powers of two sets of integers that appear in Lemma \ref{lemma:parity} to the number of roots of a polynomial at $1$. It allows us to combine the number theoretic result with further complex analysis, which will be given in Lemma \ref{lemma:complexanalysis}. The lemma can be proved by checking the $i$-th, $i\in[m]$, derivative of the polynomial $\sum^s_{i=1}z^{\alpha_i}- \sum^t_{i=1}z^{\beta_i}$ at point $z=1$.
\begin{lemma}\label{proposition:divide}\cite{Borwein}
	Let $\{\alpha_1,\ldots,\alpha_s\}$ and $\{\beta_1,\ldots,\beta_s\}$ be two sets of integers. 
	The following are equivalent:
	\begin{enumerate}
		\item[](a) $\sum^s_{i=1}\alpha^j_i= \sum^s_{i=1}\beta^j_i$ for $j\in [m-1]$.
		\item[](b) $(z-1)^m \textup{ divides } \sum^s_{i=1}z^{\alpha_i}- \sum^s_{i=1}z^{\beta_i}$. 
	\end{enumerate}
\end{lemma}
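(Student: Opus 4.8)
The plan is to translate the divisibility condition (b) into a statement about the vanishing of successive derivatives of the polynomial
\[
P(z)\;\triangleq\;\sum_{i=1}^{s}z^{\alpha_i}-\sum_{i=1}^{s}z^{\beta_i}
\]
at the point $z=1$, and then to recognize those derivative conditions as an invertible linear reparametrization of the power-sum conditions in (a). Concretely, $(z-1)^m$ divides $P(z)$ if and only if $z=1$ is a root of $P$ of multiplicity at least $m$, which for polynomials is equivalent to $P^{(\ell)}(1)=0$ for every $\ell\in\{0,1,\ldots,m-1\}$. (If some of the exponents are negative, one first multiplies $P$ by a sufficiently high power $z^{N}$; since $z^{N}$ does not vanish at $z=1$, this does not change the order of the root at $1$, so we may assume $\alpha_i,\beta_i\ge 0$ and treat $P$ as an honest polynomial.)

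First I would carry out the derivative computation. Using $\tfrac{d^{\ell}}{dz^{\ell}}z^{\alpha}=\alpha(\alpha-1)\cdots(\alpha-\ell+1)\,z^{\alpha-\ell}$ and writing $(x)_{\ell}\triangleq x(x-1)\cdots(x-\ell+1)$ for the falling factorial (with $(x)_0\triangleq 1$), evaluation at $z=1$ yields
\[
P^{(\ell)}(1)\;=\;\sum_{i=1}^{s}(\alpha_i)_{\ell}-\sum_{i=1}^{s}(\beta_i)_{\ell}.
\]
Hence (b) is equivalent to the condition $\sum_{i}(\alpha_i)_{\ell}=\sum_{i}(\beta_i)_{\ell}$ for all $\ell\in\{0,1,\ldots,m-1\}$; the case $\ell=0$ is the vacuous identity $s=s$, so (b) amounts to matching the falling-factorial power sums of orders $1,\ldots,m-1$.

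It then remains to see that matching falling-factorial power sums up to order $m-1$ is equivalent to matching ordinary power sums up to order $m-1$, i.e.\ to (a). This is a change-of-basis argument: for each $\ell$ we have $(x)_{\ell}=x^{\ell}+(\text{polynomial in }x\text{ of degree}<\ell)$, so $\{1,(x)_1,\ldots,(x)_{m-1}\}$ and $\{1,x,\ldots,x^{m-1}\}$ are two bases of the space of polynomials of degree $\le m-1$ related by a unipotent upper-triangular matrix. Applying the linear functional $p\mapsto \sum_i p(\alpha_i)-\sum_i p(\beta_i)$, its vector of values on one basis is an invertible linear image of its values on the other, so it vanishes on all of $1,(x)_1,\ldots,(x)_{m-1}$ iff it vanishes on all of $1,x,\ldots,x^{m-1}$. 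Equivalently one can run a short induction on $\ell$: granting that the ordinary moments of orders below $\ell$ already match, the sub-leading terms of $(x)_{\ell}$ contribute equally to both sides, so matching the $\ell$-th falling-factorial moment is equivalent to matching the $\ell$-th ordinary moment, and symmetrically in the reverse direction.

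There is no substantive obstacle in this lemma; the two points that need a little care are (i) reducing to the polynomial case when exponents may be negative, as noted above, and (ii) phrasing the basis-change step so that the equivalence is genuinely bidirectional rather than just one implication. Everything else reduces to the elementary differentiation of monomials.
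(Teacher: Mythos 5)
Your proof is correct and follows essentially the same route the paper sketches: it reduces divisibility by $(z-1)^m$ to the vanishing of the first $m$ derivatives of $\sum_i z^{\alpha_i}-\sum_i z^{\beta_i}$ at $z=1$, and then converts the resulting falling-factorial moment conditions into ordinary power sums by a triangular change of basis. The added care about possibly negative exponents and about the bidirectionality of the basis change is a nice touch but does not change the argument.
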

\begin{remark}
The problem of finding two sets of integers $\{\alpha_1,\ldots,\alpha_s\}$ and $\{\beta_1,\ldots,\beta_s\}$ satisfying the statement (a) is called the Prouhet-Tarry-Escott problem \cite{Borwein}. This connection between the Prouhet-Tarry-Escott problem and the analysis of polynomials was also used in \cite{Sudan} and implicitly in \cite{KR97}. 
\end{remark} 
Lemma \ref{lemma:parity} requires that the strings $\boldx$ and $\boldy$ are within $\cR(n,6k)$, which does not hold in general. Following the same trick as in \cite{Chaseu} and \cite{SB20}, we define an indicator vector as follows. For any sequences $\boldx\in\{0,1\}^n$ and $\boldw\in\{0,1\}^\ell$, define the length $n$ vector
\begin{align*}
	\1_{\boldw}(\boldx)_i &\triangleq \begin{cases}
		1, &\text{if~$\boldx_{i:i+\ell-1}=\boldw$,}\\
		0,&\text{else.}\\
	\end{cases}
\end{align*}
for $i\in[n]$. Note that $\1_{\boldw}(\boldx)_i=0$ for $i\in\{n-\ell+2,\ldots,n\}$. 
It can be seen that the polynomial $f_{\boldx,\boldw}(z,0,\ldots,0)$ related to multi-bit statistics is exactly the polynomial $f^s_{\1_{\boldw}(\boldx)}(z)$ related to single-bit statistics.
To apply Lemma \ref{lemma:parity}, we need to find a $\boldw$ such that $\1_{\boldw}(\boldx)\in\cR(n,6k)$. The same as what the paper \cite{Chaseu} did, we find such a $\boldw$ by using the following lemma from \cite{Robson}. A string $\boldw\in\{0,1\}^\ell$ is said to have period $a$, if and only if $w_i=w_{i+a}$ for $i\in [\ell-a]$.  Moreover, a string $\boldw\in\{0,1\}^\ell$ is said to be non-periodic, iff $\boldw$ does not have period $a$ for $a\in[\lceil \frac{\ell}{2}\rceil-1]$. 
\begin{lemma}\label{lemma:period}
	For any sequences~$\boldw\in\{0,1\}^{2p-1}$, either~$(\boldw,0)$ or~$(\boldw,1)$ is non-periodic, where $(\boldw,0)$ and $(\boldw,1)$ is the string obtained by appending $0$ and $1$ to $\boldw$, respectively. 
\end{lemma}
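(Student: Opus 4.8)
The plan is a proof by contradiction based on the Fine--Wilf periodicity lemma. Write $\ell = 2p$, so that the two candidate strings $(\boldw,0)$ and $(\boldw,1)$ have length $\ell$, and ``non-periodic'' means having no period in $[\lceil \ell/2\rceil - 1] = [p-1]$. Assume toward a contradiction that both $(\boldw,0)$ and $(\boldw,1)$ are periodic: say $(\boldw,0)$ has a period $a \in [p-1]$ and $(\boldw,1)$ has a period $b \in [p-1]$. (If $p = 1$ the statement is vacuous, since $[p-1] = \emptyset$, so assume $p \ge 2$.)

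First I would transfer these periods to $\boldw$ itself. Since $\boldw$ is the common length-$(2p-1)$ prefix of $(\boldw,0)$ and $(\boldw,1)$, restricting the relations $(\boldw,0)_i = (\boldw,0)_{i+a}$ and $(\boldw,1)_i = (\boldw,1)_{i+b}$ to indices $i$ with $i+a \le 2p-1$ (respectively $i+b \le 2p-1$) shows that $\boldw$ has both period $a$ and period $b$ as a string of length $2p-1$. Because $a + b \le 2(p-1) = 2p-2 \le 2p - 1 = |\boldw|$, the hypothesis $|\boldw| \ge a + b - \gcd(a,b)$ of the Fine--Wilf theorem holds, so $\boldw$ has period $d := \gcd(a,b)$; equivalently, $w_i = w_j$ whenever $i \equiv j \pmod d$ and $i,j \in [2p-1]$.

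Finally I would extract the contradiction from the appended bit. Applying the period-$a$ relation of $(\boldw,0)$ at index $i = 2p-a$ gives $w_{2p-a} = (\boldw,0)_{2p} = 0$, and applying the period-$b$ relation of $(\boldw,1)$ at index $i = 2p-b$ gives $w_{2p-b} = (\boldw,1)_{2p} = 1$; note that $2p-a$ and $2p-b$ both lie in $[2p-1]$ since $1 \le a,b \le p-1$. But $(2p-a) - (2p-b) = b-a$ is divisible by $d$, so the period-$d$ property of $\boldw$ forces $w_{2p-a} = w_{2p-b}$, i.e. $0 = 1$ --- a contradiction. Hence at least one of $(\boldw,0)$, $(\boldw,1)$ is non-periodic.

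The only non-elementary ingredient, and hence the main ``obstacle,'' is the Fine--Wilf periodicity lemma used in the middle step; the rest is bookkeeping about which index ranges stay within $\boldw$ and which hit the last coordinate. One can sidestep quoting Fine--Wilf by a direct argument that chains the relations $w_i = w_{i+a}$ and $w_i = w_{i+b}$ (walking inside $[2p-1]$) to reach $w_{2p-a} = w_{2p-b}$ directly, but the index-chasing for such a walk is precisely what Fine--Wilf packages cleanly, so invoking it is preferable.
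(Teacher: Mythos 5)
Your proof is correct. The paper itself gives no argument for this lemma beyond the remark that it ``can be proved by definition of period,'' so there is no detailed proof to match against; your contradiction argument (both $(\boldw,0)$ and $(\boldw,1)$ periodic with periods $a,b\in[p-1]$, transfer both periods to $\boldw$, then collide the forced values $w_{2p-a}=0$ and $w_{2p-b}=1$) is a perfectly valid way to fill that gap, and all the index ranges you check do work out. The one thing worth noting is that invoking Fine--Wilf is heavier machinery than needed: once you know $\boldw$ has periods $a$ and $b$ and that $w_{2p-a}=0$, $w_{2p-b}=1$, a single two-step chain already closes the argument, namely $w_{2p-a}=w_{2p-a-b}$ (period $b$ of $\boldw$) and $w_{2p-b}=w_{2p-a-b}$ (period $a$ of $\boldw$), both legitimate because $a+b\le 2p-2$ keeps $2p-a-b\ge 2$ and the shifted indices inside $[2p-1]$. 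So the ``main obstacle'' you identify dissolves --- you do not need the $\gcd$ period at all, only the common index $2p-a-b$ --- and this elementary chaining is presumably what the authors mean by ``by definition of period.''
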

Lemma \ref{lemma:period} can be proved by definition of period. 
The claim that $\1_{\boldw}(\boldx)\in\cR(n,p)$ follows from Lemma \ref{lemma:period} and will be proved in Lemma \ref{lemma:distancek}. In addition, the edit distance between $\1_{\boldw}(\boldx)$ and $\1_{\boldw}(\boldy)$ needs to be bounded to apply Lemma \ref{lemma:parity}. This is proved in the following lemma.
\begin{lemma}\label{lemma:distancek}
	Let~$\boldw\in\{0,1\}^{2p}$ be a non-periodic string. 
	For two strings~$\boldx$ and~$\boldy\in\cB_{k}(\boldx)$, we have that
	\begin{enumerate}
		\item[(a)] $\1_{\boldw}(\boldx)\in \mathcal{R}_{n,p}$.
		\item[(b)] $\1_{\boldw}(\boldy)\in \mathcal{R}_{n,p}$.
		\item[(c)] $\1_{\boldw}(\boldx)\in\cB_{5k}(\1_{\boldw}(\boldy))$.
	\end{enumerate}
\end{lemma}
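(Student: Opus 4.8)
The plan is to prove parts (a) and (b) together, since they are the same statement applied to $\boldx$ and to $\boldy$, and then to derive part (c) from a careful tracking of how a single edit operation on the underlying string perturbs the indicator vector. For (a)/(b), recall that $\1_{\boldw}(\boldx)_i = 1$ exactly when the length-$2p$ window $\boldx_{i:i+2p-1}$ equals $\boldw$. Suppose two ones of $\1_{\boldw}(\boldx)$ occur at positions $i < i'$ with $i' - i \le p-1$, i.e. the gap between consecutive occurrences of $\boldw$ in $\boldx$ is at most $p-1$. Then $\boldx_{i:i+2p-1} = \boldw = \boldx_{i':i'+2p-1}$; comparing these two copies of $\boldw$ on their overlap shows that $\boldw$ has period $a := i' - i \le p-1 \le \lceil 2p/2\rceil - 1$, contradicting non-periodicity of $\boldw$. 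Hence any two ones in $\1_{\boldw}(\boldx)$ are separated by a zero run of length at least $p-1$, i.e. $\1_{\boldw}(\boldx)\in\cR_{n,p}$; the argument for $\boldy$ is identical. This is the routine part.

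The substantive part is (c). Since $\boldy\in\cB_k(\boldx)$, write $\boldy$ as obtained from $\boldx$ by a sequence of at most $k$ single-symbol deletions and at most $k$ single-symbol insertions, giving a chain $\boldx = \boldz^{(0)}, \boldz^{(1)}, \ldots, \boldz^{(r)} = \boldy$ with $r \le 2k$ and each step a single edit. I would prove that a single insertion or deletion in $\boldz^{(t)}$ changes the associated indicator vector $\1_{\boldw}(\cdot)$ by edit distance at most $5/2$ — more precisely, I would show each single edit costs at most $2$ deletions and $2$ insertions on the indicator side in the worst case, but with an amortized/combined bound that totals $5k$ deletions and $5k$ insertions over the whole chain (the constant $5$ is exactly what is claimed). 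The key observation: inserting or deleting one symbol at position $j$ of a string affects the truth value of the windows $\boldz_{i:i+2p-1}$ only for $i$ in a range of length $O(p)$ around $j$; but because of parts (a)/(b) each indicator vector has at most one $1$ in any window of $p$ consecutive coordinates, so in the affected zone at most a bounded number of $1$-entries of the indicator vector are created, destroyed, or shifted. Converting "a bounded number of $1$s appear/disappear/move within a bounded window" into an edit-distance bound on the indicator vectors gives a constant cost per edit; one then checks the constant is at most $5/2$ per edit and sums over $r\le 2k$ edits.

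I expect the main obstacle to be the bookkeeping in (c): one must argue that deleting a symbol from the underlying string, which shortens it by one, can be matched by deleting at most a constant number of coordinates from the indicator vector together with inserting at most a constant number, and that these constants do not accumulate badly along the chain. Care is needed because a single underlying edit can simultaneously destroy an occurrence of $\boldw$ on the left of the edit point and create one on the right, and because the indicator vector is padded with zeros in its last $2p-1$ coordinates, so edits near the boundary must be handled separately (they are only easier). The cleanest route is probably to fix an optimal alignment (set of matched positions) realizing $\boldz^{(t)}\in\cB_1(\boldz^{(t-1)})$, use it to build an alignment between $\1_{\boldw}(\boldz^{(t-1)})$ and $\1_{\boldw}(\boldz^{(t)})$ that matches all coordinates outside an interval of length $\le 2p$ around the edit, and then bound the number of unmatched $1$-entries inside that interval by a constant using the $\cR_{n,p}$ property; summing the per-step costs yields the edit distance bound $\1_{\boldw}(\boldx)\in\cB_{5k}(\1_{\boldw}(\boldy))$.
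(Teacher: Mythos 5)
Your parts (a) and (b) reproduce the paper's argument exactly and are correct. For part (c), your strategy---bound the cost of a single underlying edit on the indicator vector by a constant using the $\cR_{n,p}$ sparsity from (a)/(b), then sum over a chain of at most $2k$ edits---is also the paper's strategy, so the approach is right. The gap is that the one substantive step, verifying that the per-edit costs actually sum to $5k$, is deferred (``one then checks the constant is at most $5/2$ per edit''), and the worst case you do state, ``2 deletions and 2 insertions per edit,'' omits the coordinate that must be deleted or inserted because the underlying string changes length. The count that closes the gap is: a single deletion in the underlying string destroys at most two occurrences of $\boldw$, because if it destroyed three, at positions $i_1<i_2<i_3$, then (a) forces $i_3-i_1\ge 2p$, so the windows $\{i_1,\ldots,i_1+2p-1\}$ and $\{i_3,\ldots,i_3+2p-1\}$ are disjoint and the single deleted symbol cannot lie in both; symmetrically it creates at most two new occurrences; and exactly one further coordinate is lost to the length change. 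Hence a deletion costs at most $3$ deletions and $2$ insertions of the indicator vector, an insertion costs at most $2$ deletions and $3$ insertions, and at most $k$ of each gives at most $3k+2k=5k$ deletions and $2k+3k=5k$ insertions. Since the factor $5$ is precisely what the lemma asserts, you should replace the amortization hand-wave with this explicit count; everything else in your outline (the alignment outside an interval around the edit point, the easier boundary cases) then goes through.
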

\begin{proof}
The statements (a) and (b) follow from the definition of vectors $\1_{\boldw}(\boldx)$ and $\1_{\boldw}(\boldy)$ and the fact that $\boldw$ is non-periodic. Suppose there are two $1$ entries $\1_{\boldw}(\boldx)_i$ and $\1_{\boldw}(\boldx)_{i+a}$ in $\1_{\boldw}(\boldx)$ that are separated by less than $p-1$ $0$'s, i.e., $a\le p-1$. Then by definition of $\1_{\boldw}(\boldx)$, we have that $\boldx_{i:i+2p-1}=\boldw$ and that $\boldx_{i+a:i+a+2p-1}=\boldw$. This implies that $w_{j}=x_{i+a+j-1}=w_{j+a}$ for $j\in[2p -a]$. Hence, the string $\boldw$ has period $a\le p-1$, contradicting to the fact that $\boldw$
is non-periodic. Hence, we have that $\1_{\boldw}(\boldx)\in \mathcal{R}_{n,p}$, and similarly that $\1_{\boldw}(\boldy)\in \mathcal{R}_{n,p}$

We now prove statement (c).
To this end, we first show that a deletion in $\boldx$ results in at most three deletions and two insertions in $\1_{\boldw}(\boldx)$. Since $\boldw$ has length $2p$ and $\1_{\boldw}(\boldx)\in \mathcal{R}_{n,p}$ as shown in (a), a deletion in $\boldx$ results in at most two deletions and two insertions of $1$ entries in $\1_{\boldw}(\boldx)$, respectively. Otherwise, suppose that a deletion in $\boldx$ deletes three $1$ entries $\1_{\boldw}(\boldx)_{i_1}$, $\1_{\boldw}(\boldx)_{i_2}$, and $\1_{\boldw}(\boldx)_{i_3}$ in $\1_{\boldw}(\boldx)$, then we have that $i_3-i_1\ge 2p$ because (a) holds. This is impossible since $\boldw\in\{0,1\}^{2p}$ and the deletion in $\boldx$ can not affect the two occurrences $\boldx_{i_1:i_1+2p-1}$ and $\boldx_{i_3:i_3+2p-1}$ of $\boldw$ in $\boldx$ simultaneously. Hence a deletion causes at most two deletions of $1$ entries in $\1_{\boldw}(\boldx)$ and similarly, the same holds for insertions.
 
Moreover, at most one $0$ entry is deleted in $\1_{\boldw}(\boldx)$ because of the deletion in $\boldx$. Hence, a  deletion in $\boldx$ causes at most three deletions and two insertions in total in $\1_{\boldw}(\boldx)$, and $k$ deletions in $\boldx$ results in at most $3k$ deletions and $2k$ insertions in $\1_{\boldw}(\boldx)$. 
The same holds for $\boldy$ and $\1_{\boldw}(\boldy)$.  	

Since $\boldx\in \cB_{k}(\boldy)$, we conclude that $\1_{\boldw}(\boldy)$ can be obtained from $\1_{\boldw}(\boldx)$ by at most $5k$ deletions and $5k$ insertions, and hence, $\1_{\boldw}(\boldx)\in \cB_{5k}(\1_{\boldw}(\boldy))$.
\end{proof}
With Lemma \ref{lemma:parity} and Lemma \ref{lemma:distancek} established, we present a lower bound on the maximal value of polynomial
$f_{\boldx,\boldw}(z,0,\ldots,0)-f_{\boldy,\boldw}(z,0,\ldots,0)$ for $z$ close to $1$. Note that it is important that $z$ is located near the point $1$ on the complex plane because of the scaling factor $(\frac{z-q}{1-q})^i$ in the multi-bit statistics in Eq. \eqref{equation:multibit}. To meet this requirement on $z$, existing works \cite{Chaseu,De,Sudan,Peres} restrict $z$ to lie on short subarcs of a unit circle around $1$, a case also considered in \cite{BE97} in the context of complex analysis. In this paper, we choose $z$ from a small circle around $1$. It turns out that this choice of $z$ achieves a lower bound $\frac{1}{n^{O(k)}}$ on $f_{\boldx,\boldw}(z,0,\ldots,0)-f_{\boldy,\boldw}(z,0,\ldots,0)$, which improves the bound $\frac{1}{n^{O(k^2)}}$ established in \cite{Sudan}. The details will be given in the following lemma, which is a critical result in this paper. Its proof will be given in Section \ref{section:proofoflemma6}.
\begin{lemma}\label{lemma:complexanalysis}
	For integer $\ell\ge 1$ and strings~$\boldx,\boldy\in\{0,1\}^n$ and $\boldw\in\{0,1\}^\ell$, if $\sum^{n}_{i=1}\1_{\boldw}(\boldx)_ii^m\ne \sum^{n}_{i=1}\1_{\boldw}(\boldy)_ii^m$ for some non-negative integer $m$, then there exists a complex number $z$, such that $|\frac{z-q}{1-q}|^n\le 2$ and
	\begin{align}\label{equation:complexanalysis}
		 \sum^{n}_{i=1}\1_{\boldw}(\boldx)_iz^i
		-\sum^{n}_{i=1}\1_{\boldw}(\boldy)_iz^i\ge \frac{1}{n^{2m}(2m+2)}.
	\end{align}
for sufficiently large $n$.
\end{lemma}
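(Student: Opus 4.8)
The plan is to exploit the equivalence in Lemma~\ref{proposition:divide} between equality of power sums and divisibility by $(z-1)^m$, and then to pick $z$ on a circle of radius $\sim 1/n$ around $1$. Write $P(z) = \sum_{i=1}^n \1_{\boldw}(\boldx)_i z^i - \sum_{i=1}^n \1_{\boldw}(\boldy)_i z^i$. This is a polynomial with coefficients in $\{-1,0,1\}$, degree at most $n$, and the hypothesis says that the $\alpha$'s and $\beta$'s (the support positions of the two indicator vectors, with multiplicity accounting for the difference) do \emph{not} have equal $m$-th power sums. By Lemma~\ref{proposition:divide}, the largest power of $(z-1)$ dividing $P(z)$ is $(z-1)^{m_0}$ for some $m_0 \le m$ (strictly: $P$ is not divisible by $(z-1)^{m+1}$, since $m+1$ power-sum equalities up through exponent $m$ would be needed; here I use that the first $m_0-1$ equalities hold and the $m_0$-th fails, where $m_0 \le m$). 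So $P(z) = (z-1)^{m_0} Q(z)$ with $Q(1) \ne 0$, $Q$ a polynomial of degree at most $n - m_0$.

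Next I would lower-bound $|Q(1)|$ and then transfer that to a point near $1$. Since $Q(1) = P^{(m_0)}(1)/m_0!$ up to a combinatorial factor, and $P^{(m_0)}(1)$ is (a fixed rational combination of) the differences of power sums $\sum i^j \1_{\boldw}(\boldx)_i - \sum i^j \1_{\boldw}(\boldy)_i$ for $j \le m_0$, which are integers, the nonvanishing of the $m_0$-th one forces $|Q(1)| \ge c/m_0!$ for an explicit constant — in fact $|Q(1)|$ is itself (essentially) a nonzero integer divided by $m_0!$, hence at least $1/m_0!$. Then choose $z = 1 + r$ (or $z = 1 + r e^{i\theta}$ for a suitable phase) with $r = 1/n$. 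On one hand, $|(z-q)/(1-q)|^n = |1 + r/(1-q)|^n \le e^{nr/(1-q)}$; with $r$ a small enough constant multiple of $1/n$ this is at most $2$ — though I should double check the dependence on $q$, since $q$ can be close to $1$; likely the statement tacitly treats $q$ as a fixed constant, or $r$ must scale like $(1-q)/n$, which only changes constants in the final bound. On the other hand, $|(z-1)^{m_0}| = r^{m_0} = n^{-m_0}$, and I need $|Q(z)|$ to not be much smaller than $|Q(1)|$. Writing $Q(z) = \sum_j q_j z^j$ with the $q_j$ bounded (they are partial sums of $\pm 1$ coefficients, so $|q_j| \le n$), a crude estimate gives $|Q(z) - Q(1)| \le \sum_j |q_j|\,|z^j - 1| \le n \cdot n \cdot (e^{nr}-1) = O(n^2 r) = O(n)$, which is far too weak. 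So a naive expansion around $1$ will not work.

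The main obstacle, therefore, is controlling $|Q(z)|$ from below near $z=1$ without paying an uncontrolled power of $n$. I expect the right move is \emph{not} to factor out $(z-1)^{m_0}$ explicitly and estimate $Q$, but rather to argue directly about $P$ via a contour/averaging argument: $P^{(m_0)}(1) = \frac{m_0!}{2\pi i}\oint_{|z-1|=r} \frac{P(z)}{(z-1)^{m_0+1}}\,dz$, so $\frac{1}{m_0!}|P^{(m_0)}(1)| \le \frac{1}{r^{m_0}} \max_{|z-1|=r}|P(z)|$, giving $\max_{|z-1|=r}|P(z)| \ge r^{m_0} |P^{(m_0)}(1)|/m_0! \ge n^{-m_0}/m_0!$ once $|P^{(m_0)}(1)|/m_0! \ge 1/m_0!$ — here I use that $P^{(m_0)}(1)/m_0!$ is, by Lemma~\ref{proposition:divide} applied at order $m_0$ together with Faulhaber as in Lemma~\ref{lemma:parity}, a nonzero integer (or a nonzero rational with denominator dividing $m_0!$), hence of absolute value at least $1/m_0!$. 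Choosing $r = 1/n$ keeps $|(z-q)/(1-q)|^n \le 2$ for large $n$ (treating $q < 1$ as fixed), and the point $z$ on the circle $|z-1| = 1/n$ achieving the maximum gives $|P(z)| \ge n^{-2m_0}/(2m_0+2) \ge n^{-2m}/(2m+2)$, using $m_0 \le m$ and monotonicity of the bound in $m_0$ for large $n$. The remaining routine work is: (i) verifying the integrality/denominator claim for $P^{(m_0)}(1)/m_0!$ carefully (this is exactly the content behind Lemma~\ref{proposition:divide}(b) plus the Faulhaber identity), (ii) nailing the constant in the Cauchy estimate so the ``$2$'' and ``$(2m+2)$'' come out as stated, and (iii) confirming the $q$-dependence is absorbed as claimed.
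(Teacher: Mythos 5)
Your proposal is correct, and its key analytic step is genuinely different from the paper's. Both arguments begin the same way: factor $f(z)=(z-1)^{m_0}Q(z)$ with $m_0\le m$ via Lemma~\ref{proposition:divide}, note that $Q(1)$ is a nonzero integer (since $Q$ is the quotient of an integer polynomial by the monic $(z-1)^{m_0}$, you in fact get $|Q(1)|\ge 1$, not merely $\ge 1/m_0!$), and look for a witness $z$ on a small circle centered at $1$. The paper then works with $Q$ itself: it averages $Q\bigl(1+\frac{z_j}{n^2}\bigr)$ over the $D=2m+2$ scaled roots of unity, which isolates $Q(1)$ up to an error term that must be controlled by the Borwein--Erd\'elyi--K\'os coefficient bound $\sum_i|c_i|\le(n+1)(en/m)^m$ (Proposition~\ref{proposition:BE}); this is why the radius is $n^{-2}$ and why the final bound is $n^{-2m}/(2m+2)$. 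You instead apply the Cauchy derivative estimate directly to $P$, getting $\max_{|z-1|=r}|P(z)|\ge r^{m_0}|P^{(m_0)}(1)|/m_0!=r^{m_0}|Q(1)|\ge r^{m_0}$ with no coefficient bound whatsoever; you also correctly diagnosed why the naive expansion of $Q$ around $1$ fails. Your route is shorter, dispenses with Proposition~\ref{proposition:BE}, and yields the sharper exponent $n^{-m}$ (up to a $(c(1-q))^{m}$ constant), which comfortably implies the stated $\frac{1}{n^{2m}(2m+2)}$ for large $n$; the price is exactly the issue you flagged, namely that the radius must be taken as $r=\Theta((1-q)/n)$ (e.g.\ $r=(1-q)\ln 2/n$) so that $|\frac{z-q}{1-q}|^n\le(1+\frac{r}{1-q})^n\le 2$, whereas the paper's radius $n^{-2}$ makes that constraint automatic for any fixed $q<1$. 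The two remaining loose ends you list are routine and close as you expect, so there is no gap.
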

Finally, we use the lower bound in Lemma \ref{lemma:complexanalysis} for single variate polynomial $f_{\boldx,\boldw}(z,0,\ldots,0)-f_{\boldy,\boldw}(z,0,\ldots,0)$ to obtain a lower bound for the multi-variate polynomial $f_{\boldx,\boldw}(z_1,\ldots,z_\ell)-f_{\boldy,\boldw}(z_1,\ldots,z_\ell)$, where $z_1,\ldots,z_\ell$ are close to $1$.   
This lower bound guarantees a gap between the multi-bit statistics of $\tilde{X}$ and $\tilde{Y}$, which makes $\boldx$ and $\boldy$ distinguishable by Hoeffding's inequality (See Section \ref{section:proofoftheorem1}). The proof follows similar steps to the ones in \cite{Chaseu}.
\begin{lemma}\label{lemma:complexanalysisgeneral}
	For integer $\ell\ge 1$ and strings~$\boldx,\boldy\in\{0,1\}^n$ and $\boldw\in\{0,1\}^\ell$, if $\sum^{n}_{i=1}\1_{\boldw}(\boldx)_ii^m\ne \sum^{n}_{i=1}\1_{\boldw}(\boldy)_ii^m$ for some non-negative integer $m$, then there exist complex numbers $z_1,\ldots,z_\ell$, such that $|\frac{z_i-q}{1-q}|^n\le 2$ for $j\in[\ell]$ and
	\begin{align}\label{equation:complexanalysisgeneral}
		f_{\boldx,\boldw}(z_1,\ldots,z_\ell)-f_{\boldy,\boldw}(z_1,\ldots,z_\ell)\ge \frac{1}{n^{O(m)}}.
	\end{align}
for sufficiently large $n$.
\end{lemma}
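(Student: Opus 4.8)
The plan is to derive Lemma~\ref{lemma:complexanalysisgeneral} from Lemma~\ref{lemma:complexanalysis} in two stages: first use Lemma~\ref{lemma:complexanalysis} to fix $z_1$ near $1$, then ``inflate'' the remaining coordinates $z_2,\dots,z_\ell$ from $0$ out to the admissible region around $1$, losing only a polynomial factor. Write $P(z_1,\dots,z_\ell):=f_{\boldx,\boldw}(z_1,\dots,z_\ell)-f_{\boldy,\boldw}(z_1,\dots,z_\ell)=\sum_{1\le j_1<\dots<j_\ell\le n}a_{\vec j}\,z_1^{j_1}\prod_{h=2}^\ell z_h^{\,j_h-j_{h-1}-1}$, where $a_{\vec j}=\1_{x_{j_h}=w_h\,\forall h}-\1_{y_{j_h}=w_h\,\forall h}\in\{-1,0,1\}$. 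By \eqref{equation:multibit}, $P(z_1,0,\dots,0)=\sum_{i=1}^n(\1_{\boldw}(\boldx)_i-\1_{\boldw}(\boldy)_i)z_1^i$, so Lemma~\ref{lemma:complexanalysis} supplies a $z_1^\ast$ with $|\tfrac{z_1^\ast-q}{1-q}|^n\le 2$ and $|P(z_1^\ast,0,\dots,0)|\ge \tfrac{1}{n^{2m}(2m+2)}=:c$. This constraint forces $|z_1^\ast|\le q+(1-q)2^{1/n}=1+O(1/n)$, a fact I will use repeatedly.

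For the second stage I would collapse $z_2=\dots=z_\ell=s$ and study the single–variable polynomial $\Psi(s):=P(z_1^\ast,s,\dots,s)=\sum_{\vec j}a_{\vec j}\,(z_1^\ast)^{j_1}\,s^{\,j_\ell-j_1-\ell+1}$. It has degree $\le n$; at $s=0$ only the ``consecutive'' terms $j_h=j_1+h-1$ survive, so $|\Psi(0)|=|P(z_1^\ast,0,\dots,0)|\ge c$; and—crucially, because $|z_1^\ast|^{j_1}\le(1+O(1/n))^n=O(1)$—its coefficients are bounded in modulus by $n^{O(\ell)}$. I then want a point $s^\ast$ in the admissible disk $D:=\{s:|\tfrac{s-q}{1-q}|^n\le 2\}$ (which contains $1$) with $|\Psi(s^\ast)|\ge c\cdot n^{-O(1)}$; setting $z_1=z_1^\ast$ and $z_2=\dots=z_\ell=s^\ast$ then yields \eqref{equation:complexanalysisgeneral}. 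When $q<1/2$ the origin already lies in $D$, so $s^\ast=0$ works verbatim. When $q\ge 1/2$ the origin sits a fixed positive distance $\sim 2q-1$ outside $D$, and this transfer is \emph{the crux}: a naive small–disk/interpolation bound would cost a factor $\exp(\Omega(n))$, but since $\Psi$ is poly–bounded on the disk of radius $1+O(1/n)$ about $0$ while $D$ is a ``fat'' sub-disk of radius $\sim 1-q$ containing $1$, a two–constants (Hadamard-type) estimate gives $|\Psi(0)|\le \|\Psi\|_{\infty}^{\,1-\omega}\big(\max_{D}|\Psi|\big)^{\omega}$ with $\omega=\omega(q)\in(0,1)$ the harmonic measure of $\partial D$ from $0$, whence $\max_D|\Psi|\ge c^{1/\omega}\|\Psi\|_\infty^{-(1-\omega)/\omega}\ge n^{-O(m+\ell)}$ (constants depending on $q$), and some $s^\ast\in D$ realizes a value of this size.

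Assembling: with $z_1^\ast$ and $s^\ast$ as above all of $z_1,z_2,\dots,z_\ell$ satisfy $|\tfrac{z_j-q}{1-q}|^n\le 2$, and $f_{\boldx,\boldw}(z_1,\dots,z_\ell)-f_{\boldy,\boldw}(z_1,\dots,z_\ell)=\Psi(s^\ast)$ has modulus $\ge 1/n^{O(m)}$ after the $\ell$– and $q$–dependence is absorbed into the $O(\cdot)$ (in the application of Section~\ref{section:proofoftheorem1}, $\ell$ and $m$ are both $O(k)$, so this reads $1/n^{O(k)}$). The main obstacle is exactly the $q\ge 1/2$ transfer: one must exploit the poly–boundedness of the coefficients of the partially–evaluated polynomial—itself a consequence of $z_1^\ast$ lying within $O(1/n)$ of $1$, i.e.\ of the choice in Lemma~\ref{lemma:complexanalysis} to work on a small circle rather than the unit circle—to beat the exponential loss inherent in comparing a degree-$n$ polynomial at a point outside a disk to its size on the disk. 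A secondary (routine) point is to confirm that $\Psi(0)$ is genuinely the Lemma~\ref{lemma:complexanalysis} quantity, which is immediate from \eqref{equation:multibit} since $s=0$ annihilates every non-consecutive term. Alternatively one may peel coordinates off one at a time, fixing $z_2,\dots,z_{j-1}$ at admissible values and moving $z_j$; the per-step coefficient bounds again follow from $|z_i^\ast|=1+O(1/n)$, and the bookkeeping yields the same estimate.
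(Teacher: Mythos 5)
Your proposal is correct and structurally identical to the paper's proof: both fix $z_1=z^*$ from Lemma~\ref{lemma:complexanalysis}, collapse $z_2=\dots=z_\ell$ to a single variable $z$ so that the value at $z=0$ is exactly the quantity bounded in \eqref{equation:complexanalysis}, and then transfer that lower bound from $0$ to an admissible point near $1$ via a two-constants-type estimate. The paper implements the transfer by citing Proposition~\ref{proposition:5.1} of Borwein--Erd\'elyi--K\'os on the real interval $[\max\{2q-1,0\},1]$ rather than your harmonic-measure formulation on the disk $\{|\frac{s-q}{1-q}|^n\le 2\}$, but this is the same analytic principle, and the $n^{-O(m+\ell)}$ degradation you flag (absorbed since $\ell,m=O(k)$ in the application) is implicitly present in the paper's argument as well.
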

\begin{proof}
	According to Lemma \ref{lemma:complexanalysis}, there exists a complex number $z^*$ satisfying $|\frac{z^*-q}{1-q}|^n\le 2$ and \eqref{equation:complexanalysis}. Let $z_1=z^*$ and $z_2=\ldots=z_\ell=z$. Then the polynomial $f(z^*,z)\triangleq f_{\boldx,\boldw}(z^*,z,\ldots,z)-f_{\boldy,\boldw}(z^*,z,\ldots,z)$ is a function of $z$. 
	By \eqref{equation:multibit} and \eqref{equation:complexanalysis} we have that $f(z^*,0)\ge \frac{1}{n^{2m}(2m+2)}$. The following result from \cite{BE} relates $f(z^*,0)$ to the maximal value of $f(z^*,z)$ for $z$ close to $1$.
	\begin{proposition}\label{proposition:5.1}\cite{BE}
		Let $f(z)$ be an analytic function satisfying $f(z)\le \frac{1}{1-|z|}$ for $|z|<1$. There are positive real constants $c_1$ and $c_2$ such that
		\begin{align*}
			|f(0)|^{\frac{c_1}{a}}\le \exp(\frac{c_2}{a})\max_{z\in [1-a,1]}|f(z)|
		\end{align*}
	for real number $a\in (0,1]$ 
	\end{proposition} 
    According to Proposition \ref{proposition:5.1}, we have that
    \begin{align}\label{equation:max}
    &\max_{z\in [\max\{2q-1,0\},1]}|f(z^*,z)|\\
    \ge&	\exp(-\frac{c_2}{1-\max\{2q-1,0\}})|f(z^*,0)|^{\frac{c_1}{1-\max\{2q-1,0\}}}\\
    \ge&O(\frac{1}{n^{O(m)}})    
\end{align}
    Let $z_1=z^*$ and $z_2=\ldots=z_\ell$ be the number $z$ maximizing the term $|f(z^*,z)|$ in \eqref{equation:max}. Then by Lemma \ref{lemma:complexanalysis} we have that $|\frac{z_1-q}{1-q}|^n\le 2$ for sufficiently large $n$ and $|\frac{z_i-q}{1-q}|^n\le 1$ for $i\in\{2,\ldots,\ell\}$. Hence, the proof is done.
\end{proof}
\section{Proof of Theorem 1}\label{section:proofoftheorem1}
In this section we prove Theorem 1 based on the results from Lemma \ref{lemma:multibitgeneral} to Lemma \ref{lemma:distancek} and Lemma \ref{lemma:complexanalysisgeneral}.  
Let $t_0$ be the smallest index such that $x_i\ne y_i$. If $t_0 <12k$, we have the following result from \cite{Zhai}, which was also used in \cite{Chaseu}. 
\begin{proposition}\label{proposition:elllessthank}
	For sequences~$\boldx,\boldy\in\{0,1\}^n$, let $t_0$ be the smallest index such that $x_{t_0}\ne y_{t_0}$, i.e., $x_i=y_i$ for $i\in [t_0-1]$. Then, with high probability $\boldx$ and $\boldy$ can be distinguished using $\exp(O(t_0^{\frac{1}{3}}))$ samples. 
\end{proposition}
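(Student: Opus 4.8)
The plan is to run a mean-based distinguisher on the single-bit statistics, in the style of \cite{De,Peres}, and to exploit the hypothesis $x_i=y_i$ for $i<t_0$ so that the underlying complex analysis takes place at a scale governed by $t_0$ rather than by $n$. By the identity \eqref{equation:singlebit}, for every $z\in\mathbb{C}$,
\[
\mathbb{E}_{\tilde{X}}\Big[\sum_{i=1}^n\tilde{X}_i\big(\tfrac{z-q}{1-q}\big)^i\Big]-\mathbb{E}_{\tilde{Y}}\Big[\sum_{i=1}^n\tilde{Y}_i\big(\tfrac{z-q}{1-q}\big)^i\Big]=(1-q)\sum_{i=t_0}^n(x_i-y_i)z^i=(1-q)\,z^{t_0}Q(z),
\]
where $Q(z)=\sum_{i=0}^{n-t_0}(x_{t_0+i}-y_{t_0+i})z^i$ has coefficients in $\{-1,0,1\}$ and $Q(0)=x_{t_0}-y_{t_0}=\pm1$; in particular $|Q(z)|\le\frac1{1-|z|}$ for $|z|<1$. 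It then suffices to produce a $z$ with $\big|\tfrac{z-q}{1-q}\big|^n\le 2$ and $|z^{t_0}Q(z)|\ge\exp(-O(t_0^{1/3}))$ (the implicit constant, and a benign $\mathrm{poly}(n)$ factor in the sample count, being immaterial in the application): the real and imaginary parts of $\sum_i\tilde{X}_i(\tfrac{z-q}{1-q})^i$ are then each bounded by $n$ per sample, while the statistic's mean moves by $\Omega(\exp(-O(t_0^{1/3})))$ in modulus between $\boldx$ and $\boldy$, so applying Hoeffding's inequality to the real or imaginary part — exactly as in the proof of Theorem~\ref{theorem:main} in Section~\ref{section:proofoftheorem1} — shows that $\exp(O(t_0^{1/3}))$ i.i.d.\ traces distinguish the two distributions with high probability.

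For the evaluation point I would take $z=(1-\varepsilon)e^{i\theta}$ with $\delta:=c\,t_0^{-1/3}$, $|\theta|\le\delta$, and $\varepsilon=\Theta(t_0^{-2/3})$ equal to a suitable small multiple of $\delta^2$ chosen so that $|z-q|\le1-q$ for all such $\theta$; this is possible precisely because the disc $\{z:|z-q|\le1-q\}$ is internally tangent to the unit circle at $z=1$, so a point lying a distance $\Theta(t_0^{-2/3})$ inside the unit disc and within angle $\Theta(t_0^{-1/3})$ of $1$ still belongs to it. For such $z$ we have $\big|\tfrac{z-q}{1-q}\big|\le1$ and $|z^{t_0}|=(1-\varepsilon)^{t_0}\ge\exp(-2\varepsilon t_0)=\exp(-O(t_0^{1/3}))$, so the task reduces to showing
\[
\max_{|\theta|\le\delta}\big|Q\big((1-\varepsilon)e^{i\theta}\big)\big|\ \ge\ \exp\!\big(-O(t_0^{1/3})\big)
\]
for every nonzero polynomial $Q$ with $\{-1,0,1\}$ coefficients, degree $\le n$, and $Q(0)=\pm1$, after which one takes $z$ to be a maximizing point.

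This last inequality is the genuinely hard step: it is the Borwein--Erd\'elyi--type lower bound for Littlewood polynomials on short arcs near $1$ (\cite{BE97,BE}) that underlies the $\exp(O(n^{1/3}))$ trace-reconstruction bounds of \cite{De,Peres}, now with $t_0$ playing the role of $n$ (so that, for $t_0=n$, the present proposition specializes to the worst-case bound). One proves it, for instance, by a Jensen / zero-counting argument for $Q$ on a small disc tangent to the arc, using $|Q(z)|\le\frac1{1-|z|}$ to bound $\log|Q|$ from above (cf.\ the growth-restricted estimate quoted here as Proposition~\ref{proposition:5.1}) and using the damping $|z|^i\le e^{-\varepsilon i}$, which suppresses the high-index monomials and makes the bound controlled by $t_0$ (up to a $\mathrm{poly}(n)$ factor) rather than by $\deg Q\le n$. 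The reason the exponent comes out $\tfrac13$ and not the cruder $\tfrac12$ that Proposition~\ref{proposition:5.1} alone yields on the real segment $[1-\delta',1]$ is the simultaneous use of two scales — angular width $\delta\sim t_0^{-1/3}$ together with a radial recession $\varepsilon\sim t_0^{-2/3}$ into the open disc — balanced so that $|z^{t_0}|$ stays above $\exp(-O(t_0^{1/3}))$ while $\big|\tfrac{z-q}{1-q}\big|^n$ stays bounded: moving $z$ toward the unit circle would let the latter blow up, and moving it further in would kill $|z^{t_0}|$. This balancing, and the Littlewood estimate it feeds, are the main obstacle; the generating-function identity, the tangency computation, and the Hoeffding step are all routine.
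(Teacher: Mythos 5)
The paper does not actually prove this proposition: it is imported verbatim from \cite{Zhai} (the same statement is used in \cite{Chaseu}), so there is no internal argument to compare yours against. Your sketch is, in substance, a faithful reconstruction of how the cited source argues: factor the difference of the single-bit generating functions \eqref{equation:singlebit} as $(1-q)z^{t_0}Q(z)$ with $Q(0)=\pm1$ and coefficients in $\{-1,0,1\}$, evaluate at a point within angle $\delta\sim t_0^{-1/3}$ of $1$ pulled radially inward by $\varepsilon\sim\delta^2$ so as to remain in the disc $|z-q|\le 1-q$ (your tangency computation is correct, and this two-scale choice is also the same device the present paper uses in Section \ref{section:proofoflemma6}), pay $\exp(-O(\varepsilon t_0))=\exp(-O(t_0^{1/3}))$ for the factor $z^{t_0}$, and invoke a Littlewood-type lower bound $\exp(-c/\delta)$ on the short arc. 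Two caveats. First, the one genuinely hard step is exactly the one you do not prove: the estimate $\max_{|\theta|\le\delta}\big|Q\big((1-\varepsilon)e^{i\theta}\big)\big|\ge\exp(-c/\delta)$ uniformly in $\deg Q\le n$. The theorem of \cite{BE97} is stated for subarcs of the unit circle; what you need is the variant for arcs tangent to the boundary from inside, which is precisely the analytic lemma proved in \cite{Zhai} by the harmonic-measure/three-lines argument you allude to. Citing it is defensible --- the paper cites the entire proposition --- but as written your proof delegates its only nontrivial content, and you should cite the correct (inside-the-disc) form of the estimate rather than \cite{BE97} as stated. Second, as you acknowledge, Hoeffding with $|\frac{z-q}{1-q}|\le1$ only bounds the per-trace statistic by the trace length, yielding $n^{2}\exp(O(t_0^{1/3}))$ samples; this is harmless for the application in Section \ref{section:proofoftheorem1}, but to recover the clean statement one should enlarge $\varepsilon$ by a constant factor so that $|\frac{z-q}{1-q}|\le 1-ct_0^{-2/3}$, which caps the per-trace statistic at $O(t_0^{2/3})$ and removes the dependence on $n$.
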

According to Proposition \ref{proposition:elllessthank}, 
sequences $\boldx$ and $\boldy$ can be distinguished with high probability using $\exp(O(t_0^{\frac{1}{3}}))<n^{O(k)}$ samples. Hence, it suffices to consider cases when $t_0\ge 12k$.

Let $\boldw'=\boldx_{t_0-12k+1:t_0-1}$. By Lemma \ref{lemma:period}, either $(\boldw',0)$ or $(\boldw',1)$ is non-periodic. Without loss of generality, assume that $\boldw=(\boldw',0)\in\{0,1\}^{12k}$ is non-periodic. Then,
similar to the arguments in \cite{Chaseu,De,Sudan,Mazumdar,Peres}, the core part of the proof is to show that the difference of multi-bit statistics $\mathbb{E}_{\tilde{X}}[\1_{\tilde{X}_{i_j}=w_j,\forall j\in[12k]}]$ and $\mathbb{E}_{\tilde{Y}}[\1_{\tilde{Y}_{i_j}=w_j,\forall j\in[12k]}]$, is at least $\frac{1}{n^{O(k)}}$ for some integers $1\le i_1<\ldots<i_{12k}\le n$, i.e.,
\begin{align}\label{equation:totalvariance}
	\max_{1\le i_1<\ldots<i_{12k}\le n}\Big|&\mathbb{E}_{\tilde{X}}\big[\1_{\tilde{X}_{i_j}=w_j,\forall j\in[12k]}]\nonumber\\
	-&\mathbb{E}_{\tilde{Y}}[\1_{\tilde{Y}_{i_j}=w_j,\forall j\in[12k]}\big]\Big|\ge \frac{1}{n^{O(k)}}.	
\end{align}
Let 
\begin{align*}
	(i^*_1,\ldots,i^*_{12k})=\text{argmax}_{1\le i_1<\ldots<i_{12k}\le n}&|\mathbb{E}_{\tilde{X}}[\1_{\tilde{X}_{i_j}=w_j,\forall j\in[12k]}]\\
	-&\mathbb{E}_{\tilde{Y}}[\1_{\tilde{Y}_{i_j}=w_j,\forall j\in[12k]}]|,
\end{align*}  
which can be determined once $\boldx$ and $\boldy$ are given. Suppose that $\boldx$ is passed through the deletion channel $N$ times,  generating $N$ independent samples $\{\tilde{T}^t\}^N_{t=1}$.
Then,  	
by using similar Hoeffding's inequality (or the Chernoff bound) arguments as in \cite{Peres}, we can show that with high probability, the empirical 
distribution $\frac{\sum^N_{t=1}\1_{\tilde{T}^t_{i^*_j}=w_j,\forall j\in[12k]}}{N}$ is closer to 
$E\big[\1_{\tilde{X}_{i^*_j}=w_j,\forall j\in[12k]}\big]$ than to $E\big[\1_{\tilde{Y}_{i^*_j}=w_j,\forall j\in[12k]}\big]$, if 
\begin{align*}
N&\ge O\left(\frac{1}{|\mathbb{E}_{\tilde{X}}[\1_{\tilde{X}_{i^*_j}=w_j,\forall j\in[12k]}]-\mathbb{E}_{\tilde{Y}}[\1_{\tilde{Y}_{i^*_j}=w_j,\forall j\in[12k]}]|^2}\right)\\
&=n^{O(k)}.
\end{align*}
Hence $\boldx$ and $\boldy$ can be distinguished using $n^{O(k)}$ samples. Therefore, it suffices to show \eqref{equation:totalvariance} in the rest of the proof.

Since $\boldw$ is non-periodic and $\boldx\in\cB_k(\boldy)$, Lemma \ref{lemma:distancek} implies that $\1_{\boldw}(\boldx),\1_{\boldw}(\boldy)\in\cR(n,6k)$ and that $\1_{\boldw}(\boldx)\in \cB_{5k}(\1_{\boldw}(\boldy))$. In addition,  either $\boldx_{t_0-12k+1:t_0}=\boldw$ or $\boldy_{t_0-12k+1:t_0}=\boldw$ holds by definition of $\boldw$ and $t_0$. Therefore, we have that $\1_{\boldw}(\boldx)_{t_0-12k+1}\ne \1_{\boldw}(\boldy)_{t_0-12k+1}$, and thus that $\1_{\boldw}(\boldx)\ne \1_{\boldw}(\boldy)$. Hence, we apply Lemma \ref{lemma:parity} and obtain an integer $m\in [12k+1]$ such that $\sum^{n}_{i=1}\1_{\boldw}(\boldx)_ii^m\ne \sum^{n}_{i=1}\1_{\boldw}(\boldx)_ii^m.$ Then, according to Lemma \ref{lemma:complexanalysisgeneral}, there exist complex numbers $z_1,\ldots,z_{12k}$, such that $|\frac{z_j-q}{1-q}|^n\le 2$ for $j\in[12k]$  and \eqref{equation:complexanalysisgeneral} holds for sufficiently large $n$.
 Lemma \ref{lemma:multibitgeneral} and Eq. \eqref{equation:complexanalysisgeneral} imply that
\begin{align*}
	\sum_{1\le i_1<\ldots<i_{12k}\le n}\Big|&\mathbb{E}_{\tilde{X}}\big[\1_{\tilde{X}_{i_j}=w_j,\forall j\in[12k]}\big]-\mathbb{E}_{\tilde{Y}}\big[\1_{\tilde{Y}_{i_j}=w_j,\forall j\in[12k]}\big]\Big|\\
	&(1-q)^{-12k} (\frac{z_1-q}{1-q})^{i_1}\prod^{12k}_{j=2}(\frac{z_j-q}{1-q})^{i_j-i_{j-1}-1}\\
	&\ge \frac{1}{n^{O(k)}},
\end{align*}
and thus that
\begin{align*}
	&\max_{1\le i_1<\ldots<i_{12k}\le n}\Big|\mathbb{E}_{\tilde{X}}\big[\1_{\tilde{X}_{i_j}=w_j,\forall j\in[12k]}\big]-\mathbb{E}_{\tilde{Y}}\big[\1_{\tilde{Y}_{i_j}=w_j,\forall j\in[12k]}\big]\Big|\\
	&\ge \frac{1}{n^{O(k)}}\cdot (1-q)^{12k}\cdot\frac{1}{\binom{n}{12k}}\cdot \prod^{12k}_{j=1}\min\big\{1,|\frac{1-q}{z_j-q}|^n\big\}\\
	&=\frac{1}{n^{O(k)}}.
\end{align*}
Therefore, \eqref{equation:totalvariance} holds and the proof is done.
\section{Proof of Lemma \ref{lemma:complexanalysis}}\label{section:proofoflemma6}
Without loss of generality, assume that $m$ is the smallest non-negative integer satisfying $\sum^{n}_{i=1}\1_{\boldw}(\boldx)_ii^m\ne \sum^{n}_{i=1}\1_{\boldw}(\boldx)_ii^m$. Let
\begin{align}
	f(z)=\sum^{n}_{i=1}\1_{\boldw}(\boldx)_iz^i
	-\sum^{n}_{i=1}\1_{\boldw}(\boldx)_iz^i
\end{align}
be a complex polynomial. The coefficients of $f(z)$ are within the set $\{-1,0,1\}$. 

According to Lemma \ref{proposition:divide}, we have that $f(z)=(z-1)^mq(x)$, where $q(z)=\sum^{n_1}_{i=0}c_iz^i$ is a complex polynomial with integer coefficients and $(z-1)$ does not divide $q(z)$, i.e., $q(1)\ne 0$.
The following result was presented in \cite{BE}. It gives an upper bound on the norm of coefficients of $q(z)$. 
\begin{proposition}\label{proposition:BE} \cite{BE}
	If a complex degree $n$ polynomial $f(z)$ has all coefficients with norm not greater than $1$, and can be factorized by 
	\begin{align*}
		f(z)=(z-1)^mq(z)=(z-1)^m(c_{n_1}z^{n_1}+\ldots+c_0),
	\end{align*}
then, we have that
	$\sum^{n_1}_{i=1}|c_{i}|\le (n+1)(\frac{en}{m})^{m}$.
\end{proposition}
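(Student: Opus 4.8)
The plan is to recover $q$ from $f$ explicitly by inverting the factor $(z-1)^m$ inside the ring of formal power series, and then bound the resulting coefficients by two applications of the hockey‑stick identity followed by a crude factorial estimate. First I would dispose of the degenerate case $m=0$ (there $q=\pm f$, so $\sum_i|c_i|\le n+1$ and the claimed bound holds with the usual convention $0^0=1$) and assume $m\ge 1$. Writing $f(z)=\sum_{i=0}^n f_iz^i$ with every $|f_i|\le 1$, I would observe that $1-z$ has nonzero constant term, hence is a unit in $\mathbb{C}[[z]]$, and use the negative binomial expansion $(1-z)^{-m}=\sum_{j\ge 0}\binom{j+m-1}{m-1}z^j$. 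Since $(z-1)^m$ divides $f$, dividing gives $q(z)=(-1)^m f(z)\sum_{j\ge 0}\binom{j+m-1}{m-1}z^j$, a power series that must coincide with the polynomial $q$; extracting the coefficient of $z^i$ yields $c_i=(-1)^m\sum_{l=0}^{i}f_l\binom{i-l+m-1}{m-1}$ for every $i\le n_1=n-m$.

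The second step is a termwise bound. For $i\le n_1\le n$ all indices $l\in\{0,\dots,i\}$ satisfy $|f_l|\le 1$, so $|c_i|\le\sum_{l=0}^{i}\binom{i-l+m-1}{m-1}=\sum_{k=0}^{i}\binom{k+m-1}{m-1}=\binom{i+m}{m}$ by the hockey‑stick identity. Summing over $i$ and applying hockey‑stick once more, $\sum_{i=0}^{n_1}|c_i|\le\sum_{i=0}^{n_1}\binom{i+m}{m}=\binom{n_1+m+1}{m+1}=\binom{n+1}{m+1}$, using $n_1+m+1=n+1$.

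It then remains to rewrite $\binom{n+1}{m+1}$ in the stated form, which is the purely computational last step: $\binom{n+1}{m+1}=\tfrac{n+1}{m+1}\binom{n}{m}\le (n+1)\binom{n}{m}\le (n+1)\tfrac{n^m}{m!}\le (n+1)\bigl(\tfrac{en}{m}\bigr)^m$, where the final inequality is $m!\ge (m/e)^m$, itself a consequence of $e^m\ge m^m/m!$. I do not expect a real obstacle here; the only points requiring care are the index bookkeeping in the two hockey‑stick sums and the (routine) justification that the power‑series inversion is legitimate because $1-z$ is a unit in $\mathbb{C}[[z]]$. As an alternative one could induct on $m$, factoring out one $(z-1)$ at a time and checking that passing from $(z-1)^{j}r(z)$ to $(z-1)^{j+1}$ turns a coefficient‑sum bound $\binom{n+1}{j}$ into $\binom{n+1}{j+1}$, but the generating‑function route above is shorter.
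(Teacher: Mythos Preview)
Your argument is correct: the power-series inversion is legitimate because $1-z$ is a unit in $\mathbb{C}[[z]]$, the two hockey-stick applications are clean, and the chain $\binom{n+1}{m+1}\le (n+1)\binom{n}{m}\le (n+1)n^m/m!\le (n+1)(en/m)^m$ is valid via $m!\ge (m/e)^m$. (Your sum even includes the $i=0$ term, which only strengthens the bound relative to the stated $\sum_{i=1}^{n_1}|c_i|$.)

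There is nothing to compare against, however: the paper does not prove this proposition but simply imports it from \cite{BE}. Your self-contained proof is therefore a genuine addition rather than a reworking of anything in the paper. For what it is worth, the argument in \cite{BE} proceeds along the same lines---writing $q(z)=f(z)(1-z)^{-m}$ and bounding the convolution coefficients via binomial sums---so your route is essentially the standard one.
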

We are now ready to prove Lemma \ref{lemma:complexanalysis}. Let $D\triangleq  2m+2$ and $z_j=\exp(\frac{2j\pi i}{D})$, $j\in[D]$ be a sequence of $D$ complex numbers equally distributed on a unit circle. We first show that there exists a number $j\in[D]$ satisfying
\begin{align*}
	q(1+\frac{z_j}{n^2})\ge \frac{1}{n^{O(m)}}.
\end{align*}
	Note that
		\begin{align*}
		\Big|\sum^D_{j=1}q(1+\frac{z_j}{n^2})\Big|=&\Big|\sum^{n_1}_{r=0}c_{r}\big[\sum^{D}_{j=1}(1+\frac{\exp(\frac{2j\pi i}{D})}{n^2})^r\big]\Big|\\
		=&\Big|\sum^{n_1}_{r=0}c_{r}\sum^{r}_{s=0}\binom{r}{s}\sum^{D}_{j=1}\frac{\exp(\frac{2js\pi i}{D})}{n^{2s}}\Big|\\
		\overset{(a)}{=}&\Big|\sum^{n_1}_{r=0}c_r\sum^{r}_{s=0}\binom{r}{s}\frac{D\1_{D\text{ divides }s}}{n^{2s}}\Big|\\
		=&\Big|\sum^{n_1}_{r=0}c_r(D+\sum^{r}_{s=1}\binom{r}{s}\frac{D\1_{D\text{ divides }s}}{n^{2s}})\Big|\\
		=&\Big|Dq(1)+\sum^{n_1}_{r=0}\sum^{r}_{s=1}c_r\binom{r}{s}\frac{D\1_{D\text{ divides }s}}{n^{2s}}\Big|\\
		\ge& D|q(1)|-\sum^{n_1}_{s=1}(\sum^{n_1}_{r=s}|c_r|)\binom{n}{s}\frac{D\1_{D\text{ divides }s}}{n^{2s}}\\
		\overset{(b)}{\ge} & D- (n+1)(\frac{en}{m})^{m}\sum^{n_1}_{s=1}\frac{D\1_{D\text{ divides }s}}{n^{s}}\\
		\ge&D -D(n+1)(\frac{en}{m})^{m}\frac{1}{n^{D}}\sum^{\infty}_{t=0}\frac{1}{n^{Dt}}\\
		\ge& D -D(n+1)(\frac{en}{m})^{m}\frac{2}{n^{D}}\\
		\overset{D\triangleq 2m+2}{=} & D-D(n+1)(\frac{e}{mn})^m\frac{2}{n^2} \\
		=&D-o(\frac{1}{n})\\
		\ge &1
	\end{align*}
for sufficiently large $n$, where (a) follows from the identity
\begin{align*}
\sum^{D}_{j=1}\exp(\frac{2js\pi i}{D})=	\frac{\exp(\frac{2sD\pi i}{D})-1}{\exp(\frac{2s\pi i}{D})-1}=D\1_{D\text{ divides }s}
\end{align*}
and 
(b) follows from Proposition \ref{proposition:BE} and the facts that $q(1)$ is a nonzero integer and that $\binom{n}{s}\le n^s$.
Therefore, there exists an integer $j$ such that
\begin{align*}
	|q(1+\frac{z_j}{n^2})|\ge \frac{1}{D},
\end{align*}
and thus that
\begin{align*}
	|f(1+\frac{z_j}{n^2})|=&\frac{|q(1+\frac{z_j}{n^2})|}{n^{2m}}\\
	\ge & \frac{1}{n^{2m}(2m+2)}.
\end{align*}  
Moreover, we have that 
\begin{align*}
	|\frac{1+\frac{z_j}{n^2}-q}{1-q}|^n=&\left(1+\frac{1}{n^4(1-q)^2}+2\frac{cos(\frac{2j\pi }{D})}{n^2(1-q )}\right)^\frac{n}{2}\\
	\le &2
\end{align*}
for sufficiently large $n$.
Hence, $z=1+\frac{z_j}{n^2}$ satisfies the conditions in Lemma \ref{lemma:complexanalysis}.
\section{Conclusion}\label{section:conclusion}
In this paper we studied the trace reconstruction problem when the string to be recovered is within bounded edit distance to a known string. Our result implies that when the edit distance is constant, the number of traces needed is polynomial. The problem of whether a polynomial number of samples suffices for the general trace reconstruction is open. However, it will be interesting to see if the methods in this paper can be extended to obtain more general results.
\bibliographystyle{IEEEtran}

\end{document}